\setheadfoot{\onelineskip}{2\onelineskip} %
  \setlist{nosep}
\def\commentProofs{0}
\tikzstyle{block} = [draw, rectangle, minimum height=2.5em, minimum width=3.5em]
\tikzset{fcname/.store in =\fcname, fcname=nan}
\tikzset{funame/.store in =\funame, funame=nan}
\tikzset{rcname/.store in =\rcname, rcname=nan}
\tikzset{runame/.store in =\runame, runame=nan}
\tikzset{feedback/.store in =\feedback, feedback=0}
\tikzset{
   DP/.style={%
      label/.style={
         font=\everymath\expandafter{\the\everymath\scriptstyle},
         inner sep=0pt,
         node distance=2pt and -2pt},
      semithick,
      node distance=1 and 1,
      rconn/.style={color=white,opacity=0.0,postaction={decorate}, shorten <=3.2pt, shorten >= 0.8,
      decoration={markings, 
      mark= at position 0 with {
               \coordinate (a);
      },
      mark=at position .5 with
      {
              \ifthenelse{\equal{\feedback}{1}}{\def\angleOut{90}\def\angleIn{90}}{\def\angleOut{0}\def\angleIn{180}}    
              \coordinate (b);
              \draw[dashed,red,opacity=1.0] (a) to[out=\angleOut,in=\angleIn,looseness=2] 
              node[pos=0.25,above=0.5mm,red,opacity=1,fill=white,inner sep=1pt,outer sep=1pt]{\scriptsize{\rcname}} (b);
      },
      mark= at position 1 with 
      {
              \ifthenelse{\equal{\feedback}{1}}{\def\angleOut{0}\def\angleIn{0}}{\def\angleOut{180}\def\angleIn{0}} 
              \ifthenelse{\equal{\feedback}{1}}{\def\symbol{\succeq}}{\def\symbol{\preceq}} 
              \coordinate (c);
              \draw[green,opacity=1.0] (c) to[out=\angleOut,in=\angleIn,looseness=2]
              node[pos=0.25,above=0.5mm,green,opacity=1,fill=white,inner sep=1pt,outer sep=1pt]{\scriptsize{\fcname}} (b){}; %
              \node[draw,circle,inner sep=0.5pt,color=black,fill=white,opacity=1.0] at (b) (nodepreceq) {$\symbol$}; 
      }
      }},
      runconn/.style={color=red,dashed,postaction={decorate},
      decoration={markings,
      mark= at position 1 with {
              \coordinate (a);
              \draw[red,opacity=1.0,dashed] ($(a)+(0.05,0)$) --++ (1.5,0) node[right,pos=0.95]{\scriptsize{\runame}};}
      }
      },
      funconn/.style={color=white,postaction={decorate},
      decoration={markings,
      mark= at position 0 with {
      \coordinate (a);
      \draw[green] ($(a)+(-0.05,0)$) -- ($(a)+(-1,0)$) node[left]{\scriptsize{\funame}};}
      }
      },
      execute at begin picture={\tikzset{
         x=\dpx, y=\dpy,
         every fit/.style={inner xsep=\dpx, inner ysep=\dpy}}}
      },
   dpx/.store in=\dpx,
   dpx = 1.5cm,
   dpy/.store in=\dpy,
   dpy = 1.75ex,
   dp port sep/.store in=\dpportsep,
   dp port sep=2,
   dp port length/.store in=\dpportlen,
   dp port length=4pt,
   dp min width/.store in=\dpminwidth,
   dp min width=1.5cm,
   dp rounded corners/.store in=\dpcorners,
   dp rounded corners=2pt,
   dp small/.style={dp port sep=1, dp port length=2.5pt, dpx=.4cm, dp min width=.4cm, dpy=.7ex},
   dp/.code 2 args={%
      \pgfmathsetlengthmacro{\dpheight}{\dpportsep * (max(#1,#2)+1) * \dpy}
      \pgfkeysalso{draw,minimum height=\dpheight,minimum width=\dpminwidth,outer sep=0pt,
         rounded corners=\dpcorners,thick,
         prefix after command={\pgfextra{\let\fixname\tikzlastnode}},
         append after command={\pgfextra{\draw
            \ifnum #1=0{} \else foreach \i in {1,...,#1} {
               ($(\fixname.north west)!{\i/(#1+1)}!(\fixname.south west)$) +(0,0) node[solid,left,circle,color=green,draw,fill=green,scale=0.3] {} coordinate (\fixname_fun\i) -- +(0,0) coordinate (\fixname_fun\i')}\fi %
            \ifnum #2=0{} \else foreach \i in {1,...,#2} {
               ($(\fixname.north east)!{\i/(#2+1)}!(\fixname.south east)$) +(0,0) coordinate (\fixname_res\i') -- +(0,0) node[solid,right,circle,color=red,draw,fill=red,scale=0.3] {} coordinate (\fixname_res\i)}\fi;
         }}}
   },
   dp name/.style={append after command={\pgfextra{\node[label=center] at (\fixname) {\textbf{#1}};}}}
   }
   \tikzset{
   oriented WD/.style={%
      every to/.style={out=0,in=180,draw},
      label/.style={
         font=\everymath\expandafter{\the\everymath\scriptstyle},
         inner sep=2pt,
         node distance=2pt and -2pt},
      semithick,
      node distance=1 and 1,
      decoration={markings, mark=at position .5 with {\arrow{stealth};}},
      ar/.style={postaction={decorate}},
      execute at begin picture={\tikzset{
         x=\bbx, y=\bby,
         every fit/.style={inner xsep=\bbx, inner ysep=\bby}}}
      },
   bbx/.store in=\bbx,
   bbx = 1.5cm,
   bby/.store in=\bby,
   bby = 1.75ex,
   bb port sep/.store in=\bbportsep,
   bb port sep=2,
   bb port length/.store in=\bbportlen,
   bb port length=4pt,
   bb min width/.store in=\bbminwidth,
   bb min width=1cm,
   bb rounded corners/.store in=\bbcorners,
   bb rounded corners=2pt,
   bb small/.style={bb port sep=1, bb port length=2.5pt, bbx=.4cm, bb min width=.4cm, bby=.7ex},
   bb/.code 2 args={%
      \pgfmathsetlengthmacro{\bbheight}{\bbportsep * (max(#1,#2)+1) * \bby}
      \pgfkeysalso{draw,minimum height=\bbheight,minimum width=\bbminwidth,outer sep=0pt,font=\bfseries,inner sep=5pt,
         rounded corners=\bbcorners,thick,
         prefix after command={\pgfextra{\let\fixname\tikzlastnode}},
         append after command={\pgfextra{\draw
            \ifnum #1=0{} \else foreach \i in {1,...,#1} {
               ($(\fixname.north west)!{\i/(#1+1)}!(\fixname.south west)$) -- +(-2*\bbportlen,0) coordinate (\fixname_input\i)}\fi %
            \ifnum #2=0{} \else foreach \i in {1,...,#2} {
               ($(\fixname.north east)!{\i/(#2+1)}!(\fixname.south east)$) -- +(2*\bbportlen,0) coordinate (\fixname_out\i)}\fi;
         }}}
   },
   bb name/.style={append after command={\pgfextra{\node[anchor=north] at (\fixname.north) {#1};}}}
}
\theoremstyle{definition}
\newtheorem{theorem}{Theorem}[chapter]
  \newtheorem{proposition}[theorem]{Proposition}
  \newtheorem{definition}[theorem]{Definition}
  \newtheorem*{axiom*}{Axiom}
  \newtheorem{example}[theorem]{Example}
  \newtheorem{discussion}[theorem]{Discussion}
  \theoremstyle{remark}
  \newtheorem{remark}[theorem]{Remark}
\DeclareSymbolFont{stmry}{U}{stmry}{m}{n}
\DeclareMathSymbol\fatsemi\mathop{stmry}{"23}
\DeclareFontFamily{U}{mathx}{\hyphenchar\font45}
\DeclareFontShape{U}{mathx}{m}{n}{
      <5> <6> <7> <8> <9> <10>
      <10.95> <12> <14.4> <17.28> <20.74> <24.88>
      mathx10
      }{}
\DeclareSymbolFont{mathx}{U}{mathx}{m}{n}
\DeclareMathAccent{\widecheck}{0}{mathx}{"71}
\newcommand{\Cat}[1]{\mathsf{#1}}%
\newcommand{\cat}[1]{\mathcal{#1}}%
\newcommand{\clock}[1]{\mathsf{Clock}_{#1}}
\newcommand{\cont}[1]{\text{Cnt}_{#1}}
\newacronym{abk:cps}{CPS}{Cyber-physical system}
\newcommand{\dir}{\mathrm{dir}}
\newcommand{\extension}[1]{\text{Ext}_{#1}}
\newcommand{\event}[1]{\text{Ev}_{#1}}
\newcommand{\eventcat}{\Cat{Event}}
\newcommand{\fcnt}{f^\text{cnt}}
\newcommand{\fin}{f^\text{in}}
\newcommand{\fevt}{f^\text{evt}}
\newcommand{\fdyn}{f^\text{dyn}}
\newcommand{\fout}{f^\text{out}}
\newcommand{\identity}[1]{\text{id}_{#1}}
\newcommand{\Int}{\Cat{Int}}
\newcommand{\lcont}[1]{\text{LCnt}_{#1}}
\newcommand{\mach}{\Cat{Mach}}
\newcommand{\op}[1]{{#1}^\text{op}}
\newcommand{\tn}[1]{\textnormal{#1}}
\newcommand{\phase}[1]{\text{Ph}_{#1}}
\newacronym{abk:poset}{poset}{partially ordered set}
\newcommand{\pullback}{\arrow[dr, phantom, "\lrcorner", very near start]}
\newcommand{\Pair}[2]{\left( {#1},{#2}\right)}
\newcommand{\triple}[3]{\left( #1,#2,#3\right)}
\newcommand{\proofref}[1]{\ifthenelse{\boolean{submission}}{}{The proof can be found in \cref{#1}.}}
\newcommand{\rest}[3]{{#1}|_{[#2,#3]}}
\newcommand{\Rgeq}{\mathbb{R}_{\geq 0}}
\newcommand{\rotatedpullbacksign}{\rotatebox[origin=c]{-45}{$\lrcorner$}}
\newcommand{\rotatedpullback}{\arrow[dd, phantom, "\rotatedpullbacksign", very near start]}
\newcommand{\set}{\Cat{Set}}
\newcommand{\sheaf}[1]{\widetilde{#1}}
\DeclareMathOperator{\Hom}{Hom}
\DeclareMathOperator{\ob}{Ob}
\newcommand{\rr}{\mathbb{R}}
\renewcommand{\ss}{\subseteq}
\newcommand{\inv}{^{-1}}
\newcommand{\qqand}{\qquad\text{and}\qquad}
\newcommand{\upd}{^{\tn{upd}}}
\newcommand{\rdt}{^{\tn{rdt}}}
\newcommand{\then}{\fatsemi}
\newcommand{\dnote}[1]{\textnormal{\color{blue}David says: #1}}
\newcommand{\gnote}[1]{\textnormal{\color{red}Gioele says: #1}}
\newcommand{\anote}[1]{\textnormal{\color{green!50!black}Andrea says: #1}}
\newcommand{\To}[1]{\xrightarrow{#1}}
\newcommand{\From}[1]{\xleftarrow{#1}}
\begin{document}
\title{A Compositional Sheaf-Theoretic Framework\\for Event-Based Systems (Extended Version)}
\author{Gioele Zardini\thanks{ETH Z\"urich, Institute for Dynamic Systems and Control, Z\"urich, Switzerland} \and David I.\ Spivak\thanks{Massachusetts Institute of Technology, Department of Mathematics, Cambridge, USA} \and Andrea Censi$^*$\and Emilio Frazzoli$^*$}
\maketitle 

\begin{abstract}
A compositional sheaf-theoretic framework for the modeling of complex event-based systems is presented. We show that event-based systems are machines, with inputs and outputs, and that they can be composed with machines of different types, all within a unified, sheaf-theoretic formalism. We take robotic systems as an exemplar of complex systems and rigorously describe actuators, sensors, and algorithms using this framework.
\end{abstract}

\chapter{Introduction}
\label{sec:intro}

This paper presents a unified modeling framework for event-based systems, focusing on the standard example of cybernetic systems. Specifically, we present event-based systems as machines, showing how to compose them in various ways and how to describe diverse interactions (between systems that are continuous, event-based, synchronous, etc). We showcase the efficacy of our framework by using it to describe a robotic system, and we demonstrate how apparently different robotic components, such as actuators, sensors, and algorithms, can be described within a common sheaf-theoretic formalism. 

\paragraph{Related Work.} \glspl{abk:cps}, first mentioned by Wiener in 1948~\cite{Wiener1948}, consist of an \emph{orchestration of computers and physical systems}~\cite{Lee2015} allowing the solution of problems which neither part could solve alone. \glspl{abk:cps} are complex systems including physical components, such as actuators, sensors, and computer units, and software components, such as perception, planning, and control modules. Starting from the first mention of \glspl{abk:cps}, their formal description has been object of studies in several disciplines, such as control theory~\cite{Branicky1998,tabuada2004,Dimarogonas2011,Heemels2012}, computer science~\cite{Henzinger2000,Platzer2008,Platzer2015,Lee2016,Lee2018}, and applied category theory~\cite{Ames2006,Schultz2019,Schultz2020}. 

Traditionally, control theorists approached \glspl{abk:cps} through the study of hybrid systems, analyzing their properties and proposing strategies to optimally control them.  In~\cite{Branicky1998}, researchers identify phenomena arising in real-world hybrid systems and introduce a mathematical model for their description and optimal control. In~\cite{tabuada2004} a compositional framework for the abstraction of discrete, continuous, and hybrid systems is presented, proposing constructions to obtain hybrid control systems. Furthermore,~\cite{Dimarogonas2011} defines event-driven control strategies for multi-agent systems, and ~\cite{Heemels2012} proposes an introduction to event- and self-triggered control systems, which are proactive and perform sensing and actuation when needed.

On the other hand, computer scientists focused on the study of verification techniques for \glspl{abk:cps}. In~\cite{Henzinger2000}, the researcher introduces a comprehensive theory of hybrid automata, and focuses on tools for the reliability analysis of safety-critical \glspl{abk:cps}. Similarly,~\cite{Platzer2008,Platzer2015} develop a differential dynamic logic for hybrid systems and introduce deductive verification techniques for their safe operation. Furthermore,~\cite{Lee2016,Lee2018} underline that to achieve simplicity and understandability, \emph{clear, deterministic modeling semantics} have proven valuable in modeling \glspl{abk:cps}, and suggest the use of modal models, ensuring that models are used within their validity regime only (e.g. discrete vs. continuous).

In 2006, ~\cite{Ames2006} proposes the first category-theoretic framework for the study of hybrid systems, defining the category of hybrid objects and applying it to the study of bipedal robotic walking. Finally, ~\cite{Speranzon2018,Schultz2019,Schultz2020} present temporal type theory and machines, which are shown to be able to describe discrete and continuous dynamical systems, and which lay the foundation for this work.

\paragraph{Organization of the Paper.}\cref{sec:back} recalls the theory of sheaves and machines presented in~\cite{Schultz2020} and includes explanatory examples. \cref{sec:eventbased} shows how to model event-based systems within this framework, and provides practical tools and examples. \cref{sec:examples} showcases the properties of the defined framework, by employing it to model the feedback control of a flying robot.

\paragraph{Acknowledgements}
We would like to thank Paolo Perrone for the fruitful discussions. DS acknowledges support from AFOSR grants FA9550-19-1-0113 and FA9550-17-1-0058.

\chapter{Background}
\label{sec:back}
The reader is assumed to be familiar with category theory: We report key concepts in~\cref{sec:appendixone}. In this section, we review the material needed to present Section~\ref{sec:eventbased}. This paper builds on the theory presented in~\cite{Schultz2019,Schultz2020}.
Let $\Rgeq$ denote the linearly ordered poset of non-negative real numbers. For any $a\in \Rgeq$, let
\begin{equation*}
    \begin{split}
        \phase{a}\colon \Rgeq &\to \Rgeq \\
        \ell&\mapsto a+\ell.
    \end{split}
\end{equation*} 
denote the translation-by-$a$ function.

\begin{definition}[Category of continuous intervals $\Int$]
\label{def:categoryint}
The \emph{category of continuous intervals $\Int$} is composed of:
\begin{itemize}
    \item Objects: $\ob(\Int)\coloneqq \Rgeq$. We denote such an object by $\ell$ and refer to it as a \emph{duration}.
    \item Morphisms: Given two durations $\ell$ and $\ell'$, the set $\Int(\ell,\ell')$ of morphisms between them is 
    \begin{equation*}
        \Int(\ell,\ell')\coloneqq\{ \phase{a} \mid a \in \Rgeq \text{ and }a+\ell\leq \ell'\}.
    \end{equation*}
    \item The identity morphism on $\ell$ is the unique element  $\identity{\ell}\coloneqq\phase{0}\in\Int(\ell,\ell).$
    \item Given two morphisms $\phase{a}\colon \ell\to \ell'$ and $\phase{b}\colon\ell'\to \ell''$, we define their composition as $\phase{a}\then \phase{b}= \phase{a+b}\in\Int(\ell,\ell'')$.
\end{itemize}
\end{definition}

We often denote the interval $[0,\ell]\ss\rr$ by $\tilde\ell$. A morphism $\phase{a}$ can be thought of as a way to include $\tilde\ell'$ into $\tilde\ell$, starting at $a$, i.e.\ the subinterval $[a,a+\ell']\ss[0,\ell]$.%
\footnote{$\Int$ is not just the category of intervals $[a,b]$ with inclusions, which we temporarily call $\Int'$; in particular $\Int'$ is a poset, whereas $\Int$ is not. For experts: $\Int'$ is the twisted arrow category of the poset $(\rr,\leq)$, whereas $\Int$ is the twisted arrow category of the monoid $(\rr,+,0)$ as a category with one object.}
\begin{proposition}
\label{prop:intcat}
$\Int$ is indeed a category: It satisfies associativity and unitality. \proofref{app:intcat}
\end{proposition}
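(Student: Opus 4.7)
The plan is to verify the three things required for $\Int$ to be a category: that composition is well-defined on the hom-sets as described, that composition is associative, and that $\identity{\ell} = \phase{0}$ acts as a two-sided identity. All three will follow almost immediately from the fact that $(\Rgeq,+,0)$ is a monoid whose addition is compatible with its order $\leq$.

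First I would check well-definedness of composition. Given $\phase{a}\in\Int(\ell,\ell')$ and $\phase{b}\in\Int(\ell',\ell'')$, we have $a+\ell\leq\ell'$ and $b+\ell'\leq\ell''$. To confirm $\phase{a+b}\in\Int(\ell,\ell'')$, I need $(a+b)+\ell\leq\ell''$, which follows by adding $b$ to the first inequality and chaining: $(a+b)+\ell = b+(a+\ell)\leq b+\ell'\leq\ell''$. Similarly I would note that $\identity{\ell}=\phase{0}$ lies in $\Int(\ell,\ell)$ because $0+\ell\leq\ell$.

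Next, associativity is an immediate consequence of associativity of addition: for composable $\phase{a},\phase{b},\phase{c}$,
\begin{equation*}
(\phase{a}\then\phase{b})\then\phase{c} = \phase{a+b}\then\phase{c} = \phase{(a+b)+c} = \phase{a+(b+c)} = \phase{a}\then(\phase{b}\then\phase{c}).
\end{equation*}
Unitality is likewise immediate from $0$ being the neutral element: $\identity{\ell}\then\phase{a} = \phase{0+a} = \phase{a}$ and $\phase{a}\then\identity{\ell'} = \phase{a+0} = \phase{a}$.

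There is no real obstacle here; the statement is essentially a restatement of the monoid axioms for $(\Rgeq,+,0)$, plus the compatibility of $+$ with $\leq$ needed to keep composites inside the prescribed hom-sets. The only mildly nontrivial bookkeeping is the well-definedness step, which is where the order-compatibility $b+\ell'\leq b+\ell''\implies \ell'\leq\ell''$ (really, the monotonicity of $x\mapsto b+x$) is used.
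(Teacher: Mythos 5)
Your proof is correct and follows essentially the same route as the paper's: direct computation of unitality and associativity from the monoid structure of $(\Rgeq,+,0)$. You additionally verify that composites and identities actually land in the prescribed hom-sets (the well-definedness step), which the paper's proof leaves implicit; this is a small but welcome addition.
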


\begin{definition}[$\Int$-presheaf]
\label{def:intsheaf}
An \emph{$\Int$-presheaf} $A$ is a functor
\begin{equation*}
    A\colon \op{\Int} \to \set
\end{equation*}
where $\set$ is the category of sets and functions. Given any duration $\ell$, we refer to elements $x\in A(\ell)$ as \emph{length-$\ell$ sections (behaviors)} of $A$. For any section $x\in A(\ell)$ and any map $\phase{a}\colon \ell'\to \ell$ we write $\rest{x}{a}{a+\ell'}$ to denote its \emph{restriction} $A(\phase{a})(x)\in A(\ell')$. 
\end{definition}

\begin{definition}[Compatible sections]
If $A$ is an $\Int$-presheaf, we say that sections $a\in A(\ell)$ and $a'\in A(\ell')$ are \emph{compatible} if the right endpoint of $a$ matches the left endpoint of $a'$, i.e.:
\begin{equation*}
    \rest{a}{\ell}{\ell}=\rest{a'}{0}{0}.
\end{equation*}
\end{definition}

\begin{definition}[$\Int$-sheaf]
An $\Int$-presheaf 
\begin{equation*}
    P\colon \op{\Int}\to \set.
\end{equation*}
is called an $\Int$-\emph{sheaf} if, for all $\ell,\ell'$ and compatible sections $p\in P(\ell)$, $p'\in P(\ell')$ (i.e., with $\rest{p}{\ell}{\ell}=\rest{p'}{0}{0}$), there exists a unique $\bar{p}\in P(\ell+\ell')$ such that 
\begin{equation*}
    \rest{\bar{p}}{0}{\ell}=p \qqand \rest{\bar{p}}{\ell}{\ell+\ell'}=p'.
\end{equation*}
Morphisms of $\Int$-sheaves are just morphisms of their underlying $\Int$-presheaves. We denote the category of $\Int$-sheaves as $\text{Shv}(\Int)\coloneqq\sheaf{\Int}$.
\end{definition}

\begin{example}[Initial and terminal objects in $\sheaf{\Int}$]\label{ex_init_term_behavior_types}
The terminal object in $\sheaf{\Int}$ is called $1$, and it assigns to each interval $\ell$ the one-element set $\{1\}$. Similarly $\varnothing\in\sheaf{\Int}$ is the initial object and sends each interval $\ell$ to the empty set $\varnothing$.
\end{example}

\begin{example}[Period-$d$ clock]
For any $d>0$ define $\clock{d}$ to be the presheaf with
\begin{equation*}
    \clock{d}(\ell)\coloneqq\big\{\{t_1,\ldots,t_n\}\ss\tilde{\ell}\mid t_1<d, \ell-t_n<d, \tn{ and } t_{i+1}-t_i=d \tn{ for all }1\leq i\leq n{-}1\big\}.
\end{equation*}
Note that $(\ell/d)-1\leq n\leq \ell/d$. We denote an element of $\mathrm{Clock}_d(\ell)$ by $\phi=\{t_1,\ldots,t_n\}\ss\tilde{\ell}$; it's the set of ``ticks'' of the clock, spaced $d$-apart. Given $\phi$ and $\phase{a}\colon\ell'\to \ell$, the restriction is given by taking those ``ticks'' that are in the smaller interval: \[\clock{d}(\phase{a})(\phi)=\rest{\phi}{a}{a+\ell'}\coloneqq \phi\cap\tilde\ell'.\]
\end{example}

\begin{definition}[Machine]\label{def.mach}
Let $A,B \in \sheaf{\Int}$. An \emph{$(A,B)$ machine} is a span in $\sheaf{\Int}$
\begin{center}
\begin{tikzcd}[row sep = tiny]
  & C \arrow[ld, swap,"f^\text{in}"] \arrow[rd, "f^\text{out}"] &   \\
A &                         & B
\end{tikzcd}
\end{center}
Equivalently, it is a sheaf $C$ together with a sheaf map $f \colon C\to A\times B$.%
\footnote{Technically, we identify spans $(C,f)$ and $(C',f')$ if there is an isomorphism $i\colon C\to C'$ with $i\then f'=f$.}
We refer to $A$ as the \emph{input sheaf}, to $B$ as the \emph{output sheaf}, and to $C$ as the \emph{state sheaf}.
\end{definition}

\begin{remark}\label{rem.det_tot_inert}
\cref{def.mach} is symmetric, whereas machines are generally considered causal: inputs affect later outputs. This is captured formally by the notion of being \emph{total, deterministic, and inertial}, defined in \cite{Schultz2020}. Totalness (resp.\ determinism) means that given $c\in C(\ell)$ and $a'\in A(\ell')$ with $\rest{\fin(c)}{0}{\ell}=a$, there is at least one (resp.\ at most one), $c'\in C(\ell)$ such that $\rest{c'}{0}{\ell}=c$ and $\fin(c')=a'$ (see \cref{def:totaldet}). In other words, there is a unique way that the internal behavior can accommodate any input coming in. A machine is $\varepsilon$-inertial if its internal state on an interval $[a,b]$ determines its output on $[a,b+\varepsilon]$.

In this paper, all of our machines will be total and deterministic; however we will not mention this explicitly. Each can also be made $\varepsilon$-inertial by composing it (\cref{def:composition}) with an $\varepsilon$-delay (\cref{ex:delay}). A result of \cite{Schultz2020} says that if all the machines in a network are deterministic, total, and inertial, then their composite is as well (\cref{def:composition}). 
\end{remark}

\begin{example}[Continuous dynamical system]
\label{ex:cds}
For Euclidean spaces $A,B$, an \emph{$(A,B)$-continuous dynamical system (CDS)} consists of
\begin{itemize}
    \item $S=\rr^n$, called the \emph{state space}.
    \item The \emph{dynamics} $\dot{s}=\fdyn(s,a)$, for any $a\in A$, $s\in S$, and smooth function $\fdyn$.
    \item The \emph{readout} $b=f\rdt(s)$, with $b\in B$ and smooth function $f\rdt$.
\end{itemize}
We can write this as a machine
\begin{center}
\begin{tikzcd}[row sep = tiny]
&D\arrow[swap]{dl}{\fin}\arrow{dr}{\fout}&\\
C^1(A)&&C^1(B)
\end{tikzcd}
\end{center}
where the apex is given by 
\begin{equation*}
D(\ell)=\{(a,s,b)\in C^1(A) \times S \times C^1(B) \mid \dot{s}=\fdyn(a,s) \tn{ and } b=f\rdt(s)\},
\end{equation*}
and $\fin$ and $\fout$ are the projections $(a,s,b)\mapsto a$ and $(a,s,b)\mapsto b$. Note that $C^1(A),C^1(B)$ represent the continuously differentiable functions on $A$ and $B$, respectively. 
\end{example}

\begin{comment}
\begin{definition}[Identity machine]
\label{def:identitymachine}
Let $A\in \sheaf{\Int}$. The \emph{identity machine} of type $A$ is a $(A,A)$ machine $(A,\identity{},\identity{})$; in other words, the functions $\fin$ and $\fout$ are identities.
\end{definition}
\end{comment}

\begin{example}[$\varepsilon$-delay]
\label{ex:delay}
Given any sheaf $A$ and positive real $\varepsilon>0$, we define the \emph{$A$-type $\varepsilon$-delay machine} $\mathrm{Del}^A_\varepsilon$ to be the span $A\From{\fin} A_\varepsilon\To{\fout} A$, where $A_\varepsilon(\ell)\coloneqq A(\ell+\varepsilon)$ is the sheaf of $\varepsilon$-extended behaviors, $\fin(a)=\rest{a}{0}{\ell}$, and $\fout(a)=\rest{a}{\varepsilon}{\ell+\varepsilon}$.
\end{example}

\begin{definition}[Composition of machines]
\label{def:composition}
Given two machines $M_1=(D_1,\fin_1,\fout_1)$ and $M_2=(D_2,\fin_2,\fout_2)$ of types $(A,B)$ and $(B,C)$ respectively, their \emph{composite} is the machine $M=(D_1\times_B D_2, \fin, \fout)$ of type $(A,C)$, namely the span given by pullback:
\begin{center}
\begin{comment}
\begin{tikzcd}
&&D_1 \times_B D_2 \arrow[ld, swap] \arrow[rd] \rotatedpullback  && \\
  & D_1 \arrow[ld, swap,"\fin_1"] \arrow[rd, "\fout_1"] & & D_2 \arrow[ld, swap,"\fin_2"] \arrow[rd, "\fout_2"]   \\
\event{A}& & \event{B} && \event{C}
\end{tikzcd}
\end{comment}
%
\begin{tikzcd}[row sep = tiny]
&&D_1 \times_B D_2 \arrow[ld, swap] \arrow[rd] \rotatedpullback  && \\
  & D_1 \arrow[ld, swap,"\fin_1"] \arrow[rd, "\fout_1"] & & D_2 \arrow[ld, swap,"\fin_2"] \arrow[rd, "\fout_2"]   \\
A& & B && C
\end{tikzcd}
\end{center}
\end{definition}

Once we demand all our machines to be inertial, they do not form a category because there is no identity machine. However they do form an algebra on an operad of wiring diagrams; see \cite{Schultz2019} and \cref{app:discat} for details.

\begin{comment}
\anote{Also: hey it should be a category if you consider a bit of a generalization of the time axis. Suppose that you have infinitesimals to work with. So you can say the identity morphism is the one that has an infinitesimal delay. Example construction: instead of $\mathbb{R}_{\geq0}$ take any lower-bounded well-ordered field. Nothing should change. Then take the ordinal numbers + division (giving infinitesimals) and say that two machines that differ in the output by infinitesimal delay are equivalent.}
\end{comment}

%
\begin{comment}
\begin{definition}[Category of machines $\mach$]
The \emph{category of machines $\mach$} is composed of:
\begin{itemize}
    \item Objects: $\ob{\mach}\coloneqq \sheaf{\Int}$.
    \item Morphisms: Given $A,B\in \sheaf{\Int}$, a morphism between them is a $(A,B)$ machine.
    \item The identity morphism is the identity machine (\cref{def:identitymachine}).
    \item Composition of morphisms is given by \cref{def:composition}.
\end{itemize}
\end{definition}
\end{comment}

\begin{comment}
\anote{Going with the formalization we have, this business of inertial or not is confusing because it is mentioned in different parts. I would do it here instead. First define
the category of machines as above. Then say that there is an embedded sub-semi-category that are the inertial machines. Also give them a name - they deserve it. 
(Do we need inertial to define a trace?)}
\end{comment}
\chapter{Event-based Systems}
\label{sec:eventbased}

\begin{definition}[Event stream]
Let $A$ be a set and $\ell\geq 0$. We define a length-$\ell$ \emph{event stream} of type $A$ to be an element of the set
\begin{equation*}
    \event{A}(\ell)\coloneqq \{ (S,a) \mid S\subseteq \tilde{\ell}, S\text{ finite }, a\colon S\to A\}.
\end{equation*}
For an event stream $(S,a)$ we refer to elements of $S=\{s_1,\ldots,s_n\}\ss\tilde{\ell}$ as \emph{time-stamps}, we refer to $a$ as the \emph{event map}, and for each time-stamp $s_i$ we refer to $a(s_i)\in A$ as its \emph{value}.

If the set of time-stamps is empty, there is a unique event map, and we refer to $(\varnothing,!)$ as an \emph{empty event stream}.
\end{definition}

\begin{example}
Consider a Swiss traffic light and its set of color transitions 
\begin{equation*}
    A=\{\mathsf{redToOrange}, \mathsf{orangeToGreen},\mathsf{greenToOrange},\mathsf{orangeToRed}\}.
\end{equation*}
We will give an example of an element $(S,a)\in \event{A}(60)$, i.e.\ a possible event stream of length $60$. Its set of time-stamps is $S=\{20,25,45,50\}$, and the event map is given by
\begin{equation*}
    \begin{split}
        a\colon S&\to A\\
        s&\mapsto \begin{cases}
        \mathsf{redToOrange}, &\text{if }s=20,\\
        \mathsf{orangeToGreen}, &\text{if }s=25,\\
        \mathsf{greenToOrange}, &\text{if }s=45,\\
        \mathsf{orangeToRed}, &\text{if }s=50.
        \end{cases}
    \end{split}
\end{equation*}
\end{example}

\begin{definition}[Restriction map on event streams]\label{def.rest_event_stream}
For any event $e= (S\ss\tilde{\ell}, a\colon S\to A)\in \event{A}(\ell)$ and any $0\leq t\leq t'\leq \ell$, let $S_{t,t'}\coloneqq \{1\leq i \leq n \mid t\leq s_i \leq t'\}\ss S$, and let $a_{t,t'}\colon S_{t,t'}\to S\to A$ be the composite. Then we define the \emph{restriction} of $e$ along $[t,t']\subseteq \tilde{\ell}$ to be $\rest{e}{t}{t'}\coloneqq(S_{t,t'},a_{t,t'})$. 
\end{definition}

\begin{proposition}
\label{prop:evfunctorial}
$\event{}$ is functorial: Given a function $f\colon A\to B$ there is an induced morphism $\event{f}:\event{A}\to\event{B}$ in $\sheaf{\Int}$, and this assignment preserves identities and composition.
\proofref{app:evfunctorial}
\end{proposition}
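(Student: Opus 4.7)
The plan is to define $\event{f}$ pointwise and then check naturality with respect to restriction, followed by the two functoriality axioms. Concretely, for each duration $\ell$, I would declare
\[
\event{f}_\ell\colon \event{A}(\ell)\to \event{B}(\ell),\qquad (S,a)\mapsto (S,\; f\circ a),
\]
i.e.\ keep the same finite set of time-stamps $S\subseteq\tilde\ell$ and postcompose the event map with $f$. This is well-defined because $S$ remains finite and $f\circ a\colon S\to B$ is still a function from a finite subset of $\tilde\ell$ to the target set.

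Next I would verify naturality, which is the only step with real content. Given any morphism $\phase{t}\colon \ell'\to \ell$ in $\Int$, I need the square
\[
\begin{tikzcd}
\event{A}(\ell)\ar[r,"\event{f}_\ell"]\ar[d,swap,"\event{A}(\phase{t})"]&\event{B}(\ell)\ar[d,"\event{B}(\phase{t})"]\\
\event{A}(\ell')\ar[r,swap,"\event{f}_{\ell'}"]&\event{B}(\ell')
\end{tikzcd}
\]
to commute. Starting from $(S,a)\in \event{A}(\ell)$, the top-right path gives $(S_{t,t+\ell'},\; (f\circ a)_{t,t+\ell'})$ by \cref{def.rest_event_stream}, while the left-bottom path gives $(S_{t,t+\ell'},\; f\circ a_{t,t+\ell'})$. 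These agree on the underlying set of time-stamps by construction, and the two event maps agree because $f\circ a$ restricted to $S_{t,t+\ell'}$ is literally $f$ postcomposed with $a$ restricted to $S_{t,t+\ell'}$; this is just associativity of function composition with the inclusion $S_{t,t+\ell'}\hookrightarrow S$.

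Finally I would dispatch the two axioms of functoriality. For identities: $\event{\identity{A}}_\ell(S,a)=(S,\identity{A}\circ a)=(S,a)$, so $\event{\identity{A}}=\identity{\event{A}}$. For composition: given $f\colon A\to B$ and $g\colon B\to C$,
\[
\event{g\circ f}_\ell(S,a)=(S,\,(g\circ f)\circ a)=(S,\,g\circ(f\circ a))=\event{g}_\ell\bigl(\event{f}_\ell(S,a)\bigr),
\]
using associativity of composition in $\set$. I do not anticipate a real obstacle here; the only place one must be careful is the naturality diagram, because the restriction on event streams simultaneously shrinks the time-stamp set and restricts the event map to that smaller set, so one must confirm that postcomposition with $f$ commutes with both operations at once—which it does because $f$ acts only on the codomain of $a$ and therefore cannot interfere with which time-stamps survive restriction.
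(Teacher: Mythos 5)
Your proposal is correct and follows essentially the same route as the paper's proof: define $\event{f}_\ell(S,a)=(S,a\then f)$ and verify preservation of identities and composition by associativity/unitality of composition in $\set$. In fact you go slightly further than the paper, which omits the naturality check of $\event{f}$ with respect to the restriction maps (the step needed to conclude that $\event{f}$ is genuinely a morphism in $\sheaf{\Int}$); your verification of that square is correct and fills the one gap in the printed argument.
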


\begin{proposition}
\label{prop:evsheaf}
For any set $A$, the presheaf $\event{A}$ is in fact a sheaf. \proofref{app:evsheaf}
\end{proposition}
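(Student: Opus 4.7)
The plan is to give a direct, constructive proof by unpacking the sheaf condition into an explicit gluing of event streams, noting that the main subtlety is that the closed-interval convention ($S\subseteq\tilde\ell$) makes time-stamps at the shared endpoint belong to both sides of the split; the compatibility hypothesis is precisely what reconciles this.

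First, I would unpack what compatibility means for event streams. Unfolding \cref{def.rest_event_stream}, for $e=(S,a)\in\event{A}(\ell)$ the restriction $\rest{e}{\ell}{\ell}\in\event{A}(0)$ is nonempty exactly when $\ell\in S$, in which case it is the event at $0$ with value $a(\ell)$ (after the shift by $-\ell$). Similarly $\rest{e'}{0}{0}$ is nonempty iff $0\in S'$, with value $a'(0)$. Hence compatibility is equivalent to: $\ell\in S \iff 0\in S'$, and when both hold, $a(\ell)=a'(0)$.

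Next, I would define the candidate gluing $\bar e=(\bar S,\bar a)\in\event{A}(\ell+\ell')$ by
\begin{equation*}
\bar S \coloneqq S \,\cup\, \{s+\ell \mid s\in S'\} \;\subseteq\; \widetilde{\ell+\ell'},
\qquad
\bar a(t) \coloneqq
\begin{cases} a(t), & t\in S, \\ a'(t-\ell), & t-\ell\in S'.\end{cases}
\end{equation*}
The set $\bar S$ is finite as the union of two finite sets. The two clauses of $\bar a$ can only collide at $t=\ell$, where the compatibility condition forces $a(\ell)=a'(0)$, so $\bar a$ is well-defined as a function $\bar S\to A$.

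I would then verify $\rest{\bar e}{0}{\ell}=e$ and $\rest{\bar e}{\ell}{\ell+\ell'}=e'$. For the first, the time-stamps of $\bar e$ lying in $[0,\ell]$ are $S\cup\{s+\ell\mid s\in S',\,s+\ell\le\ell\}=S\cup\{\ell\mid 0\in S'\}$; by compatibility the second set is absorbed into $S$, so the time-stamp set is exactly $S$ and the value map agrees with $a$ by construction. The second equality is symmetric.

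Finally, for uniqueness, suppose $\bar e''=(\bar S'',\bar a'')\in\event{A}(\ell+\ell')$ also restricts to $e$ and $e'$. Since $\widetilde{\ell+\ell'}=[0,\ell]\cup[\ell,\ell+\ell']$, every time-stamp of $\bar e''$ lies in one of the two halves, where it is forced to belong to $S$ or to $S'+\ell$, respectively; thus $\bar S''=\bar S$, and the values of $\bar a''$ are forced by those of $a$ and $a'$ on each half, so $\bar a''=\bar a$. The main (and only) point requiring care throughout is the treatment of the shared boundary time $\ell$, which the compatibility condition was designed to handle.
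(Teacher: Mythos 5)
Your proof is correct and follows essentially the same route as the paper's: unpack compatibility at the shared endpoint, glue the time-stamp sets via $S\cup\{s+\ell\mid s\in S'\}$ with the induced value map, and check the restrictions. You are in fact somewhat more careful than the paper, which only asserts uniqueness and the well-definedness at $t=\ell$ rather than spelling them out as you do.
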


\begin{remark}
\label{remark:monoidalstructure}
There is a monoidal structure $(\varnothing,\odot)$ on $\set$:  Its unit is $\varnothing$ and the monoidal product of $A$ and $B$ is $A\odot B\coloneqq A+B+A\times B$.
\end{remark}

\begin{proposition}
\label{prop:evstrong}
$\event{}\colon(\set,\odot,\varnothing)\to(\sheaf{\Int},\times,1)$ is a strong monoidal functor: $1\cong \event{\varnothing}$ and for any sets $A,B$, we have
\begin{equation*}
\event{A}\times\event{B}\cong\event{A\odot B}.
\end{equation*}
\proofref{app:evstrong}
\end{proposition}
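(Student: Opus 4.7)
The plan is to verify the two defining conditions for strong monoidality, namely the isomorphisms $1\cong\event{\varnothing}$ and $\event{A}\times\event{B}\cong\event{A\odot B}$, and to check naturality and coherence with the monoidal structures.

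For the unit isomorphism, I would observe that an element $(S,a)\in\event{\varnothing}(\ell)$ requires a function $a\colon S\to\varnothing$, which forces $S=\varnothing$. Hence $\event{\varnothing}(\ell)=\{(\varnothing,!)\}$ is a singleton for every $\ell$, and the restriction maps are all identities, so $\event{\varnothing}\cong 1$ as stated in the example on initial and terminal objects in $\sheaf{\Int}$.

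For the binary isomorphism, I would define, for each $\ell$, a map $\phi_\ell\colon\event{A}(\ell)\times\event{B}(\ell)\to\event{A\odot B}(\ell)$ by sending $\bigl((S_A,a),(S_B,b)\bigr)$ to $(S_A\cup S_B,\,c)$ where
\begin{equation*}
c(s)\coloneqq
\begin{cases}
a(s)\in A & \text{if } s\in S_A\setminus S_B,\\
b(s)\in B & \text{if } s\in S_B\setminus S_A,\\
(a(s),b(s))\in A\times B & \text{if } s\in S_A\cap S_B,
\end{cases}
\end{equation*}
viewed as an element of $A+B+A\times B=A\odot B$. The inverse $\psi_\ell$ sends $(S,c)$ to the pair $\bigl((S_A,a),(S_B,b)\bigr)$ where $S_A\subseteq S$ is the preimage of the summands $A$ and $A\times B$, $S_B\subseteq S$ is the preimage of $B$ and $A\times B$, and $a,b$ are obtained by projecting $c|_{S_A}$ and $c|_{S_B}$ onto the $A$- and $B$-components respectively. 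A direct case check shows $\phi_\ell\circ\psi_\ell$ and $\psi_\ell\circ\phi_\ell$ are identities. Naturality in $\ell$ follows from \cref{def.rest_event_stream}, since restriction to $[t,t']$ commutes with union of time-stamps and with case-analysis on the value of $c$; naturality in $A,B$ follows from \cref{prop:evfunctorial} applied componentwise.

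The main obstacle, and the only place where care is needed, is verifying the strong-monoidal coherence axioms. For unitality this is essentially vacuous, since any "event in $\varnothing$" contributes no time-stamps. For associativity I would compare the two isomorphisms
\[
\event{A}\times\event{B}\times\event{C}\;\cong\;\event{(A\odot B)\odot C}\;\cong\;\event{A\odot(B\odot C)}
\]
via the construction above; both present a length-$\ell$ triple as a single finite set $S=S_A\cup S_B\cup S_C\subseteq\tilde\ell$ together with an assignment whose value at each $s$ lies in one of the seven summands indexed by the nonempty subset $\{X\in\{A,B,C\}\mid s\in S_X\}$. Since this indexing does not depend on the order of bracketing, the associator hexagon commutes on the nose. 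The symmetry axiom is analogous, swapping the roles of $A$ and $B$. The only real work, therefore, is the bookkeeping of expanding $(A\odot B)\odot C$ and $A\odot(B\odot C)$ into the seven-summand coproduct and checking the case analysis matches.
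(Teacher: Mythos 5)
Your proof is correct and follows essentially the same route as the paper's: the same componentwise isomorphism $\event{A}(\ell)\times\event{B}(\ell)\cong\event{A\odot B}(\ell)$ built by taking the union of time-stamp sets and case-splitting on membership in $S_A$, $S_B$, or both. You go somewhat further than the paper by spelling out the inverse, the unit argument via the emptiness of maps into $\varnothing$, and the seven-summand bookkeeping for the coherence axioms, all of which the paper dispatches with a remark that they ``follow easily.''
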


\begin{definition}[Event-based system]
\label{def:eventbasedsystem}
Let $A,B$ be sets. An \emph{event-based system} $P=(C,\fin,\fout)$ of type  $(A,B)$ is a machine
\begin{equation}\label{eqn.span_ebs}
\begin{tikzcd}[row sep = tiny]
  & C \arrow[ld, swap,"\fin"] \arrow[rd, "\fout"] &   \\
\event{A} &                         & \event{B}
\end{tikzcd}
\end{equation}
between two event streams. For any input event stream $e\in \event{A}(\ell)$, the preimage $\left( \fin\right)^{-1}(e)$ is the set of all internal behaviors consistent with $e$.
\end{definition}

\begin{remark}
     An event-based system $P$ of type  $(A,B)$ is graphically represented as
\begin{center}
\begin{tikzpicture}[oriented WD, bb min width =1.5cm, bby=2ex, bbx=.7cm,bb port length=3pt]
    \node[bb port sep=0.8, bb={1}{1}, bb name={$P$}] (ev) {};
    \node [black, left = 0.2 of ev] {$\event{A}$};
    \node [black, right = 0.2 of ev] {$\event{B}$};
\end{tikzpicture}
\end{center}
\end{remark}

%
\begin{comment}
\begin{definition}[Identity event-based system]
\label{def:identityeventbased}
Let $A$ be a set. The \emph{identity event-based system} defined analogously to the identity machine (\cref{def:identitymachine}), i.e. is an $(A,A)$ event-based system $(\event{A},\identity{},\identity{})$.
\end{definition}
\end{comment}

\begin{example}[Discrete dynamical systems as event-based systems]
\label{ex:dds}
Let $A,B$ be sets. Then an \emph{$(A,B)$-discrete dynamical system (DDS)} consists of:
\begin{itemize}
    \item A set $S$, elements of which are called \emph{states}.
    \item A function $f\upd\colon A\times S\to S$, called the \emph{update function}.
    \item A function $f\rdt\colon S\to B$, called the \emph{readout function}.
\end{itemize}
We can transform any $(A,B)$ dynamical system into an $(A,B)$-event-based system as follows. Define $D$ to be the sheaf with the following sections:
\begin{equation*}
    D(\ell)\coloneqq\{T\ss\tilde{\ell}, (a,s)\colon T\to A\times S\mid T \tn{ finite and } s_{i+1}=f\upd(a_i,s_i)\tn{ for all }1\leq i\leq n-1\}.
\end{equation*}
The restriction map is the same as for the underlying event stream (\cref{def.rest_event_stream}).
We define the span \eqref{eqn.span_ebs} as follows. Given $(T,a,s)\in D(\ell)$, we have
 $\fin(T,a,s)=(T,a)$ and $\fout(T,a,s)=(T,(s\then f\rdt))$.
\end{example}

Given a function $f: S\to A$ and a subset $A'\ss A$, we can take the preimage $f\inv(A')\ss S$ and get a function $f\big|_{A'}\colon f\inv(A')\to A'$. This can be used to define a filter for event-based systems.

\begin{definition}[Filter for event-based systems]
\label{def:filter}
Let $A$ and $A'\subseteq A$ be sets. An $(A,A')$-\emph{filter} is an $(A,A')$-event-based system with $D=A$, $\fin=\identity{}$, and $\fout(\ell):\event{A}\to\event{A'}$ defined on $(S,a)$ where $S\ss\tilde{\ell}$ and $a:S\to A$ as follows:
\begin{equation*}
\fout(\ell)(S,a)\coloneqq( a\inv(A'),a\big|_{A'}).
\end{equation*}
In other words it consists of the subset (the preimage of $a\inv(A')\ss S$) of those time-stamps whose associated values are in $A'$, together with the original event map on that subset.
\end{definition}

\begin{definition}[Continuous stream]\label{def.continuous}
Let $A$ be a topological space. We define a \emph{continuous stream} of type $A$ to be
\begin{equation*}
    \cont{A}(\ell)\coloneqq \{a\mid a \colon \tilde{\ell}\to A \text{ continuous}\}.
\end{equation*}
\end{definition}
\begin{definition}[Lipschitz continuous function]
Given two metric spaces $(A,d_A)$ and $(B,d_B)$, a function $f\colon A\to B$ is called \emph{Lipschitz continuous} if there exists a $K\in \mathbb{R}$, $K\geq 0$, such that for all $a_1,a_2\in A$:
\begin{equation*}
    d_B(f(a_1),f(a_2))\leq Kd_A(a_1,a_2).
\end{equation*}
\end{definition}
\begin{remark}[Lipschitz continuous stream]
In \cref{def.continuous}, if $A$ is a metric space, then we can consider the subsheaf consisting of only those streams $a\colon\tilde\ell\to A$ that are Lipschitz continuous:
\[
\lcont{A}(\ell)=\{a\colon\tilde\ell\to A\mid a\tn{ Lipschitz continuous}\}\ss\cont{A}(\ell).
\]
\end{remark}

\begin{example}
For any $S$, there is a \emph{codiscrete} topological space $\hat{S}$ with points $S$ and only two open sets: $\varnothing$ and $S$. For any topological space $X$, the functions from the underlying set of $X$ to $S$ are the same as the continuous maps $X\to\hat{S}$.%
\footnote{Technically, one can say that the underlying set functor is left adjoint to the codiscrete functor.}
Thus, we have
\begin{align*}
    \cont{\hat{S}}(\ell)&=
    \{a\mid a\colon\tilde{\ell}\to\hat{S}, a \text{ continuous}\}\\&=
    \{a\mid a\colon\tilde{\ell}\to S\}.
\end{align*}
We sometimes denote $\cont{\hat{S}}$ simply by $\cont{S}$.
\end{example}

\begin{definition}[Sampler]\label{def.sampler}
Let $A$ be a topological space and choose $d\in \Rgeq$, called the \emph{sampling time}. A \emph{period-$d$ $A$-sampler} is a span
\begin{center}
\begin{tikzcd}
  & \clock{d}\times\cont{A}\arrow[ld, swap,"\fcnt"] \arrow[rd, "\fevt"] &   \\
\cont{A} &                         & \event{A}
\end{tikzcd}
\end{center}
where $\fcnt,\fevt$ are morphisms:
\begin{equation*}
    \begin{split}
        \fcnt \colon \clock{d} \times\cont{A}&\to \cont{A}\\
        (\phi,a)&\mapsto a,\\
        \fevt \colon \clock{d} \times\cont{A}&\rightarrow \event{A}\\
        (\phi,a)&\mapsto\phi\then a.
    \end{split}
\end{equation*}
The second formula is the composite $\phi\subseteq \tilde{l}\To{a}A$, which takes the value of $a$ at $d$-spaced intervals. 
\end{definition}

\begin{definition}[$L$-level-crossing sampler]
\label{def:levelcrossing}
Let $(A,\mathrm{dist})$ be a metric space and consider a Lipschitz input stream $\lcont{A}(\ell)$. Consider the \emph{level} $L\in \rr$. A \emph{$L$-level-crossing sampler} of type $A$ is a machine
\begin{center}
\begin{tikzcd}[row sep = tiny]
    &P\arrow[swap]{dl}{\fcnt}\arrow{dr}{\fevt}&\\
    \lcont{A}&&\event{A}
\end{tikzcd}
\end{center}
with $P(\ell)\coloneqq \{(c,a_0) \mid c\in \lcont{A}(\ell), a_0\in A\}$. Given $p=(c,a_0)\in P(\ell)$, we have $\fcnt(c,a_0)=c$.
Furthermore, either $\mathrm{dist}(c(t),a_0)<L$ for all $t \in \tilde
{\ell}$ or there exists $t\in \tilde{\ell}$ with $\mathrm{dist}(c(t_1),a_0) \geq L$. In the first case, take $\fevt(c)=(\varnothing,!)$ to be the empty event stream. In the second case, define
\begin{equation*}
    t_1=\inf\{t\in \tilde{\ell}\mid\mathrm{dist}(c(t_1),a_0) \geq L\}
\end{equation*}
and let $a_1=c(t_1)$. Recursively, define $t_{i+1}\in \tilde{\ell}$ to be the least time such that $\mathrm{dist}(c(t_{i+1}),a_i) \geq L$ (if there is one). There will be a finite number of these because $c$ is Lipschitz. We denote the last of such times by $t_n$. Then, we define
\begin{equation*}
    \begin{split}
        \fevt\colon P(\ell) &\to \event{A}\\
        (c,a_0) &\mapsto \{t_1,\ldots,t_n,c(t_1),\ldots,c(t_n)\}.
    \end{split}
\end{equation*}
\end{definition}

\begin{definition}[Reconstructor]
\label{def:reconstructor}
Let $A$ be a set. A \emph{reconstructor} of type $A$ is a span
\begin{center}
\begin{tikzcd}[row sep = tiny]
  & C \arrow[ld, swap,"\fevt"] \arrow[rd, "\fcnt"] &   \\
\event{A} &                         & \cont{A}
\end{tikzcd}
\end{center}
where $C(\ell)=\{(S,a_0,a)\mid S\ss\tilde{\ell}\tn{ finite}, a_0\in A, a\colon S\to A\}$, where $\fevt\colon C\to\event{A}$ is given by $
\fevt(S,a_0,a)\coloneqq (S,a)$,
and where $a'\coloneqq\fcnt(a_0,s_1,\ldots,s_n,a_1,\ldots,a_n)\colon\tilde{\ell}\to A$ is given by
\begin{equation*}
    a'(t)\coloneqq
    \begin{cases}
    a_0,& 0\leq t<s_1\\
    a(s_i),&s_i\leq t <s_{i+1}, \quad i\in \{2,\ldots,n-1\}\\
    a(s_n),&s_n\leq t\leq \ell.
    \end{cases}
\end{equation*}
\end{definition}

\begin{remark}
The reconstructed stream is not continuous, it is only piecewise continuous. Luckily, it is continuous with respect to the codiscrete topology. We denote $\hat{A}$ simply by $A$.
\end{remark}
\begin{remark}
The reconstructor is known in signal theory as the zero-order-hold (ZOH), and represents the practical signal reconstruction performed by a conventional digital-to-analog converter (DAC).
\end{remark}

\begin{definition}[Composition of event-based systems]
\label{def:compositionevent}
Given two event-based systems $P_1=(D_1,\fin_1,\fout_1)$ and $P_2=(D_2,\fin_2,\fout_2)$ of types $(A,B)$ and $(B,C)$, they compose as machines do (\cref{def:composition}). Their \emph{composite} is the $(A,C)$-event-based system $P=(D_1\times_B D_2, \fin, \fout)$.
\end{definition}

\begin{comment}
\begin{remark}\dnote{Erase?}
Given that $\fout_1$ is $\varepsilon_1$-inertial and $\fout_2$ is $\varepsilon_2$-inertial, $\fout$ is $\varepsilon_3$-inertial, with $\varepsilon_3=\varepsilon_1+\varepsilon_2$.
\end{remark}
\end{comment}

\begin{remark}
    We refer to the composition of event-based systems as putting them in \emph{series}, and represent it graphically as
\begin{center}
\begin{tikzpicture}[oriented WD, bb min width =1.5cm, bby=2ex, bbx=.7cm,bb port length=3pt]
    \node[bb port sep=0.8, bb={1}{1}, bb name=${P_1}$] (p1) {};
    \node[bb port sep=0.8, bb={1}{1},right=2 of p1.east, bb name=${P_2}$] (p2) {};
    \node[bb port sep=0.8, bb={1}{1},right=4 of p2.east, bb name=${P}$] (p3) {};
    \node [black, left = 0.2 of p1] {$\event{A}$};
    \draw[ar] (p1) to node[pos=0.5,above] {$\event{B}$} (p2);
    \node [black, right = 0.2 of p2] {$\event{C}$};
    \node at ($(p2.east)!.5!(p3.west)$) {$\equiv$};
    \node [black, left = 0.2 of p3] {$\event{A}$};
    \node [black, right = 0.2 of p3] {$\event{C}$};
\end{tikzpicture}
\end{center}
\end{remark}

%
\begin{comment}
\begin{definition}[Category of event-based systems $\eventcat$]
The \emph{category of event-based systems $\eventcat$} is the full subcategory of $\mach$ composed of:
\begin{itemize}
    \item Objects: $\ob{\eventcat}\coloneqq \sheaf{\event{}}$.
    \item Morphisms: Given two event sheaves $\event{A},\event{B}\in \sheaf{\event{}}$, the set $\eventcat(\event{A},\event{B})$ of morphisms between them are all the $(A,B)$ event-based systems (\cref{def:eventbasedsystem}).
    \item Given an event sheaf $\event{A}\in \sheaf{\event{}}$, its identity morphism is the unique identity event-based system (\cref{def:identityeventbased}). 
    \item Composition of morphisms is given by composition of event-based systems (\cref{def:compositionevent}).
\end{itemize}
\end{definition}
\end{comment}

\begin{definition}[Tensor Product of event-based systems]
\label{def:product}
Given two event-based systems $P_1=(E_1,\fin_1,\fout_1)$ and $P_2=(E_2,\fin_2,\fout_2)$ of types $(A,B)$ and $(C,D)$, their tensor product is an event-based system $P=(E_1\times E_2, \fin, \fout)$ of type $(A\odot C, B\odot D)$, i.e. a span
\begin{center}
\begin{tikzcd}[row sep = tiny]
  & E_1\times E_2 \arrow[ld, swap,"\fin_1\times\fin_2"] \arrow[rd, "\fout_1\times\fout_2"] &   \\
\event{A}\times\event{C} &                         & \event{B}\times\event{D}
\end{tikzcd}
\end{center}
This is an event-based system because $\event{A}\times\event{C}\cong\event{A\odot C}$ and $\event{B}\times\event{D}\cong\event{B\odot D}$.
\end{definition}
\begin{remark}
    We refer to the tensor product of event-based systems as putting them in \emph{parallel}, and represent it graphically as
    \begin{center}
        \begin{tikzpicture}[oriented WD, bb min width =1.5cm, bby=2ex, bbx=.7cm,bb port length=3pt]
    \node[bb port sep=0.8, bb={1}{1}, bb name=${P_1}$] (dp_1) {};
    \node[bb port sep=0.8, bb={1}{1},below=1.5 of dp_1.south, bb name=${P_2}$] (dp_2) {};
    \node[bb port sep=0.8, bb={1}{1},below right=0.75 and 5.5 of dp_1.east, bb name=${P}$] (dp_3) {};
    \node [black, left = 0.2 of dp_1] {$\event{A}$};
    \node [black, right = 0.2 of dp_1] {$\event{B}$};
    \node [black, left = 0.2 of dp_2] {$\event{C}$};
    \node [black, right = 0.2 of dp_2] {$\event{D}$};
    \coordinate (helper) at (dp_3-|dp_1.east);
    \node at ($(helper)!.5!(dp_3.west)$) {$\equiv$};
    \node [black, left = 0.2 of dp_3] {$\event{A\odot C}$};
    \node [black, right = 0.2 of dp_3] {$\event{B\odot D}$};
    \end{tikzpicture}
    \end{center}
\end{remark}

\begin{definition}[Trace of event-based system]
\label{def:trace}
Given an event-based system $P=(D,\fin,\fout)$ of type $(A\times C, B\times C)$, its trace is an event-based system $P'=(E_{\mathrm{tr}},\fin_{\mathrm{tr}}, \fout_{\mathrm{tr}})$ of type $(A,B)$ given by 
\[
E_{\mathrm{tr}}(\ell)=\{d\in D(\ell)\mid\pi_2(\fin(d))=\pi_2(\fout(d))\}
\]
with $\fin_{\mathrm{tr}}(d)=\pi_1(\fin(d))\in\event{A}$ and $\fout_{\mathrm{tr}}(d)=\pi_1(\fout(d))\in\event{B}$.
\end{definition}

\begin{remark}
We depict an event-based system with trace as having \emph{feedback} or a \emph{loop}, and represent it graphically as
\begin{equation*}
\begin{tikzpicture}[oriented WD, bb min width =1.5cm, bby=2ex, bbx=.7cm,bb port length=3pt]
    \node[bb port sep=0.8, bb={2}{2}, bb name={$P$}] (Loop) {};
    \draw[ar] let \p1=(Loop.north east), \p2=(Loop.north west), \n1={\y1+\bby}, \n2=\bbportlen in (Loop_out1) to[in=0] (\x1+\n2,\n1) -- (\x2-\n2,\n1) to[out=180] (Loop_input1);
    \node [black, above right = 0.1 and 0.2 of Loop] {$\event{C}$};
    \node [black, below left = 0.0 and 0.2 of Loop.west] {$\event{A}$};
    \node [black, below right = 0.0 and 0.2 of Loop.east] (evB) {$\event{B}$};
    \node[bb port sep=0.8, bb={1}{1}, bb name={$P'$}, right = 4.5 of Loop] (noloop) {};
    \node [black, right = 0.2 of noloop] {$\event{B}$};
    \node [black, left = 0.2 of noloop] (evA) {$\event{A}$};
    \node at ($(evB)!.5!(evA)$) {$\equiv$};
\end{tikzpicture}
\end{equation*}
\end{remark}

\chapter{Example: Neuromorphic Optomotor Heading Regulation}
In the following, we want to show that complex \glspl{abk:cps} can be described with the framework presented in Section~\ref{sec:eventbased}. 
To do so, we consider the neuromorphic optomotor heading regulation problem studied in~\cite{Censi2015}. Specifically, we consider the case of a body moving in the plane and changing its orientation, expressed as an element of SO(2), based on the scene observed by an event camera mounted on it. The result is reported in Figure~\ref{fig:robot}, which shows that this complex system can be understood as the composition (\cref{def:composition}) of machines.
\label{sec:examples}

\section{Background on Event Cameras}
\label{sec:backgroundevent}
Event cameras are \emph{asynchronous} sensors which have introduced a completely new acquisition technique for visual information~\cite{Licht2008}.  This new type of sensors samples light depending on the scene dynamics and therefore differs from standard cameras. Particularly, event cameras show notable advantages such as high temporal resolution, low latency (in the order of microseconds), low power consumption, and high dynamic range, all of which naturally encourage their employment in robotic applications. An exhaustive review of the existing applications of event cameras has been reported in~\cite{Gallego2019}.

\begin{figure}[tb]
\begin{center}
\subfloat[\label{fig:dvsa}]{\includegraphics[width=0.31\columnwidth]{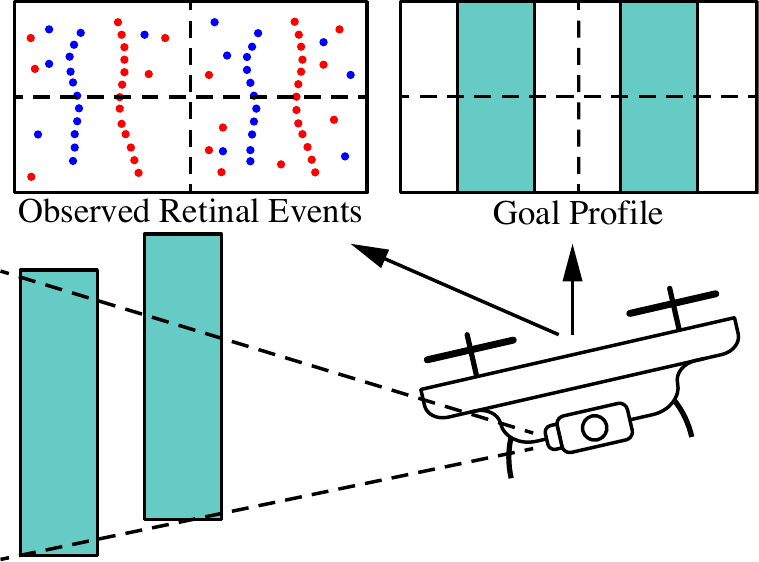}}
\quad
\subfloat[\label{fig:dvsb}]{\includegraphics[width=0.31\columnwidth]{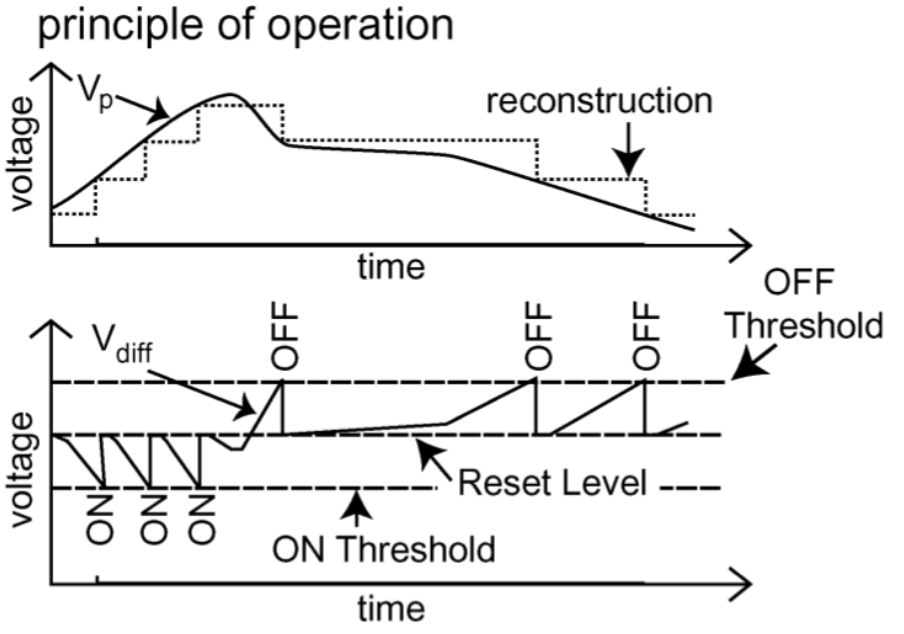}}
\quad
\subfloat[\label{fig:dvsc}]{\includegraphics[width=0.31\columnwidth]{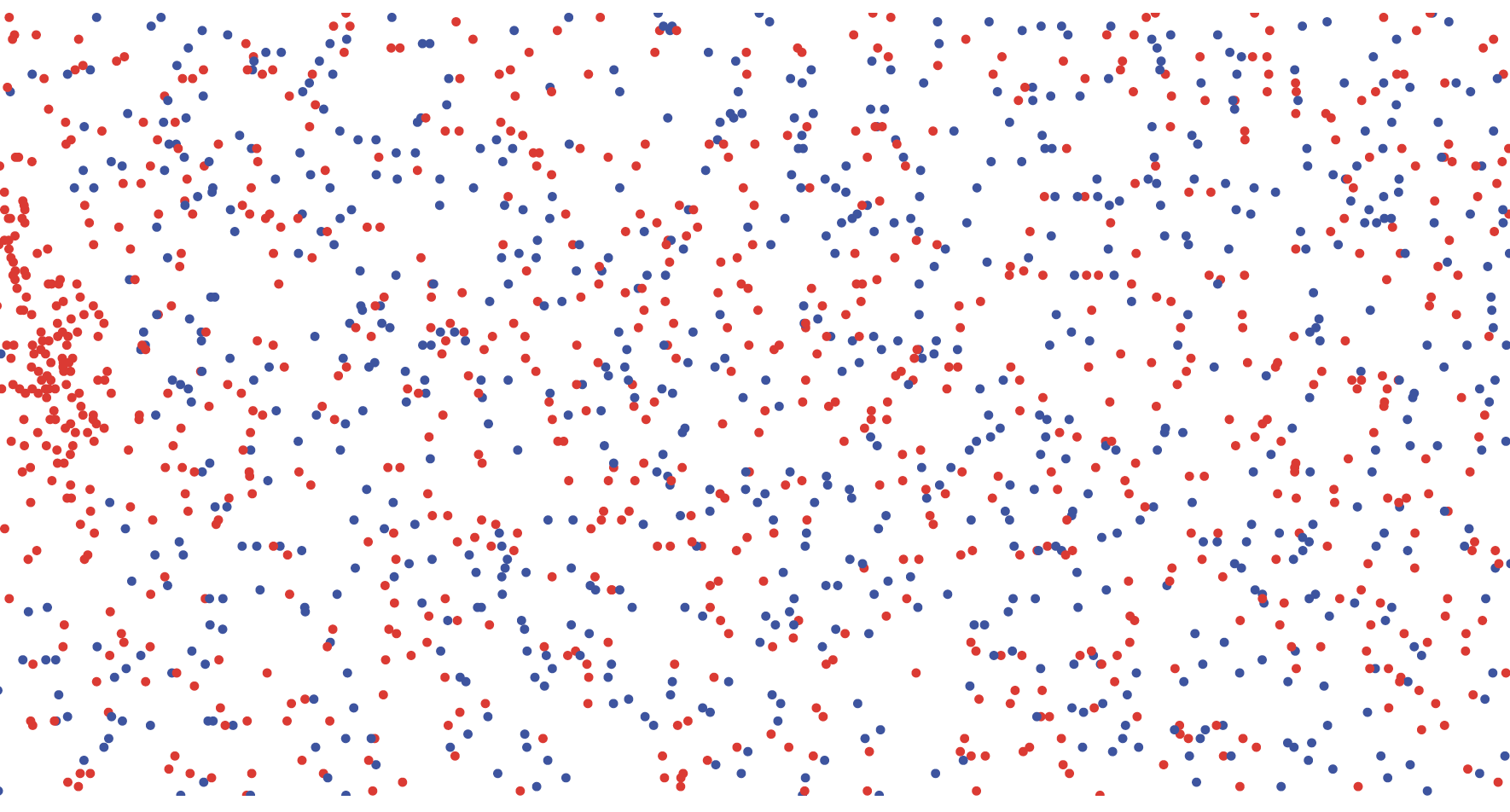}}
\end{center}
\caption{(a) Illustration of  the neuromorphic heading regulation problem. (b) Working principle of the Dynamic Vision Sensor~\cite{Licht2008}. (c) Events over a given time-window~\cite{Censi2015} \label{fig:dvs}.}
\end{figure}

\subsubsection*{Event Generation Model}
In the following, we review the event generation model presented in~\cite{Censi2015}. An event camera~\cite{Licht2008} is composed of a set $\mathcal{S}$, elements of which are called pixels, reacting (independently) to changes in light \emph{brightness}. There is a function $\dir \colon \mathcal{S}\to \mathbb{S}^1$, representing the \emph{direction} of each pixel in the event camera's field of view. The environment reflectance is a function $m\colon \mathbb{S}^1 \to \Rgeq$, such that $m(\dir(s))$ represents the intensity of light from direction $\dir(s)\in \mathbb{S}^1$ at any given moment of time. The light field is a map
\begin{equation}
\label{eq:lightInt}
\begin{split}
    I\colon \Rgeq \times \mathcal{S} &\to \Rgeq\\
    \Pair{t}{s} &\mapsto I_{t}^{s},
\end{split}
\end{equation}
where $I_{t}^{s}$ represents the intensity of light reaching the event camera at time $t$ in direction $\dir(s)\in \mathbb{S}^1$. Brightness is expressed as $L\coloneqq \log(I_{t}^{s})$. An event $e=\triple{s}{t}{p}$ is generated at pixel location $s$ at time $t$ if the change 
\begin{equation*}
    \Delta L(s,t)\coloneqq L(s,t) - L(s, t')
\end{equation*}
in brightness at that pixel since the last event $t'$ was fired reaches a threshold $\pm C$ (Figure~\ref{fig:dvsb}), i.e. $|\Delta L(s,t)|= pC$, where $p\in \{-1,1\}$ represents the event's polarity~\cite{Gallego2019} (\textcolor{blue}{blue} and \textcolor{red}{red} in Figure~\ref{fig:dvs}).

\begin{remark}
Note that the contrast sensitivity $C$ can be tuned depending on the applications through the pixel bias currents, as explained in~\cite{Nozaki2017}.
\end{remark}
\section{Robotic System Description}
In this section, we describe the robotic system presented in~\cite{Censi2015}. The system is composed of a body, which is able to move on the plane and to change its orientation (heading), expressed as an element of SO(2). An event camera is mounted on the body, and allows for it to perceive the environment. Furthermore, heading regulation happens in image space via feedback from the event camera through a decision process (regulator). A graphical representation of the robotic system is shown in Figure \ref{fig:robot}.
\begin{figure}[tbh]
\begin{center}
\input{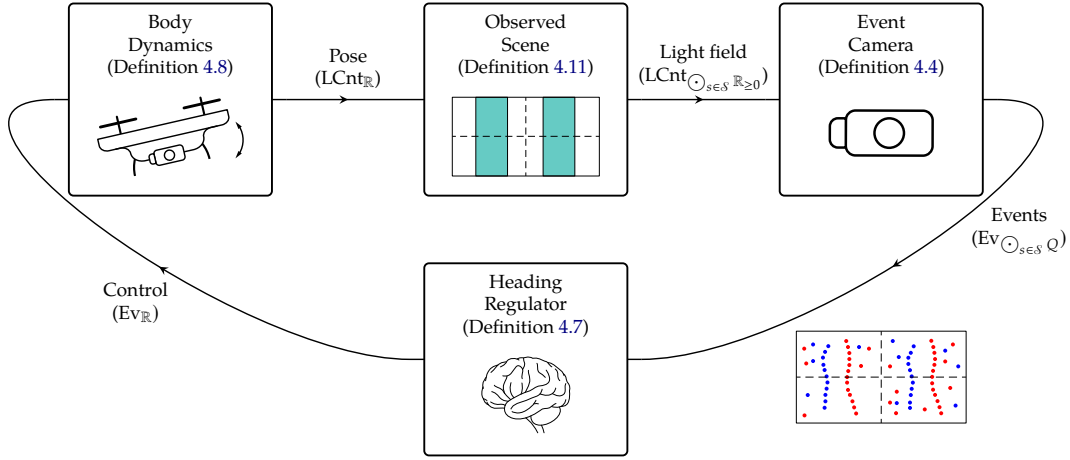}
\caption{Graphical representation of the neuromorphic heading regulation problem as a composition of machines, together with input-output stream types.}
\label{fig:robot}
\end{center}
\end{figure}

In the following, each one of the aforementioned components will be described as a machine, showing the unifying properties of our framework. The components' interactions will be represented through composition and product of machines. The diagram reported in Figure~\ref{fig:robot} contextualizes the following definitions.

\subsection{Event Camera}
To represent an event camera in our framework, we first consider a single pixel as a machine.  In \cref{def:eventcamera} we will define an event camera to be the product (Definition~\ref{def:product}) of $n$ independent event camera pixels.  

\begin{definition}[Event camera pixel]
\label{def:ebpixel}
First, define a $(\lcont{\Rgeq},\lcont{\rr})$ machine $P_1$
\begin{center}
\begin{tikzcd}[row sep = tiny]
&A\arrow[dl,equal,swap]\arrow{dr}{\fout}&\\
\lcont{\Rgeq}&&\lcont{\rr}
\end{tikzcd}
\end{center}
with $A(\ell)\coloneqq \{a \colon \tilde{\ell}\to \rr\mid a\text{ Lipschitz continuous}\}.$ For $a\in A(\ell)$, define $\fin(a)=a$ (i.e. the identity, from now on as in the given span) and $\fout(a)=\log(a)$. Then,
for any $C\in \Rgeq$, called the \emph{contrast sensitivity}, let $P_2$ be a $C$-level-crossing sampler (\cref{def:levelcrossing}) of type $\rr$, with input $\lcont{\rr}$ and output $\event{\rr}$. Finally, let $P_3$ be the $(\rr,Q)$-event-based system corresponding to the DDS (\cref{ex:dds}) of input-output type $(\rr,Q)$, $Q=\{-1,1\}$, with state set $S=\Rgeq \times Q$, consisting of pairs $(r,q)$, readout $f\rdt\colon S\to Q$ defined as $f\rdt(r,q)=q$, and update function defined as
\begin{equation*}
    \begin{split}
        f\upd\colon \rr\times S&\to S\\
        (r',r,q)&\mapsto
        \begin{cases}
        (r',1),& r'-r\geq C\\
        (r',-1),&r'-r\leq C.
        \end{cases}
    \end{split}
\end{equation*}
 
An \emph{event-camera pixel} for an event camera with contrast sensitivity $C$ is the composite (\cref{def:composition}) machine $P_C=P_1\then P_2\then P_3$ with input of type $\lcont{\Rgeq}$ and output of type $\event{Q}$.
\end{definition}

\begin{remark}[Interpretation]
Recalling the working principle of an event camera pixel, the continuous input of $P^C$ represents the light intensities measured by a pixel at a specific time (Equation~\ref{eq:lightInt}). The machine $P_1$ computes the $\log$ of the intensities, and the $C$-level-crossing sampler $P_2$ determines when the changes in intensities are sufficient to generate events. The DDS $P_3$ determines the polarity of the events, by storing it in the state set together with the brightness of the previously fired event.
\end{remark}
We are now ready to define an event camera as a machine, arising from the product of the event camera pixels (machines) composing it.
\begin{definition}[Event camera]
\label{def:eventcamera}
Let $\cat{S}$ be a set, elements of which we think of as pixels. For any contrast sensitivity $C\in\Rgeq$, define the \emph{event camera with $\cat{S}$ pixels and contrast sensitivity $C$} to be the product (Definition~\ref{def:product}) $\prod_{s\in\cat{S}}P_C=(P_C)^\mathcal{S}$, where $P_C$ is as in \cref{def:ebpixel}. The input stream of $P_C$ is of type $\lcont{\bigodot_{s\in \cat{S}}\Rgeq}$ and the output stream is of type $\event{\bigodot_{s\in \cat{S}}Q}$.
\end{definition}

\begin{remark}
Recall the monoidal structure presented in \cref{remark:monoidalstructure}. The strong monoidality of $\event{}$ (\cref{prop:evstrong}) means that each event of the event camera consists of an event at one or more of its pixels.
\end{remark}

\subsection{Heading Regulator}
Given the body with heading $\theta_t$, one wants to steer it toward some ``goal'' heading $\theta_g$, i.e. one wants to reach $\theta_{t+\Delta_t} \in [\theta_g-\delta,\theta_g+\delta]$, for some probability $1-\varepsilon(\delta)$. This is achieved by a \emph{heading regulator}, which only considers the events measured by the event camera, and takes decisions in real time. To do so, one needs a function $f:\mathcal{S}\rightarrow \rr$, called the \emph{estimator}, which for each event observed on pixel $s_i \in \mathcal{S}$ at time $t_j$ gives an estimate of the current heading of the body $\theta_{t_j}=f(s_i)$. As shown in~\cite{Censi2015}, it is sufficient to find an $f$ such that
\begin{equation}
\label{eq:functionForStats}
    \int_{\mathcal{S}}f(s)p_e(s,t)\text{d}s=\theta_t,
\end{equation}
in a neighborhood of $\theta_g$, where $p_e(s,t)$ represents the probability of event $e$ being generated at pixel $s$ at time $t$.
\begin{remark}
As shown in~\cite{Censi2015}, a possible choice for $f$ is
\begin{equation*}
\begin{split}
    f:\mathcal{S}&\rightarrow \rr\\
    s&\mapsto f(s)\coloneqq \int_{s-\delta}^{s+\delta}p_{\theta}(s-v)\vert \nabla m^v\vert (s-v)\text{d}v,
\end{split}
\end{equation*}
where each pixel $v$ contributes with $(s-v)$, weighted by the probability of generating an event $(\nabla m^v)$, given the probability $p_\theta(s-v)$ of $\theta$ being $s-v$.
\end{remark}
The events used to regulate the heading are described through statistics $S_t\in \rr$, which are computed asynchronously, when an event is observed. Given a function $f$, one can define the heading regulator as an event-based system.

Note that $\bigodot_{s\in\cat{S}}Q\cong\{(S,q)\mid S\ss\cat{S} \text{ non-empty}, q\colon S\to Q\}$. In fact, the heading regulator will not use the polarities $q$, but only the firing sets $S$. We will follow the style of \cref{ex:dds}, but what follows does not arise from a DDS because we explicitly use the time-stamps.

\begin{definition}[Heading regulator]
\label{def:headingregulatorEB}
Given an event-camera (\cref{def:eventcamera}), we define an $(\bigodot_{s\in \cat{S}}Q,\rr)$ event-based system $H=(D,\fin,\fout)$ as follows. Let $X=\Rgeq\times\rr$ and define
\begin{equation*}
    D(\ell)\coloneqq \big\{t_1,\ldots,t_n, (S_1,q_1),\ldots,(S_n,q_n),x_1,\ldots,x_n
    \mid x_i=f\upd((S_i,q_i),t_i,x_{i-1})\big\},
\end{equation*}
where $\{t_1,\ldots,t_n\} \subseteq \tilde{\ell}$, $x_i\in X$, $(S_i,q_i)\in \bigodot_{s\in S}Q$, and
\begin{equation*}
\begin{split}
    f\upd \colon  \textstyle\bigodot_{s\in \cat{S}}Q\times \Rgeq \times X&\to X\\
    ((S,q),t,x) &\mapsto \left(t,\sum_{s_i\in S} e^{-a(t-\pi_1(x))}\pi_2(x)-\frac{\kappa}{a}f(\dir(s_i))\right),
\end{split}
\end{equation*}
with $f$ satisfying Equation~\ref{eq:functionForStats}, and $a>0,\kappa>0$ tunable parameters. Then define $f\rdt(x)=\pi_2(x)$ and
\begin{equation*}
    \begin{split}
        \fin \colon D(\ell)&\to \event{\textstyle\bigodot_{s\in \cat{S}}Q}\\
        d&\mapsto \fin(d)\coloneqq \{t_1,\ldots,t_n,(S_1,q_1),\ldots,(S_n,q_n)\}, \quad (S_i,q_i)\in \textstyle\bigodot_{s\in \cat{S}}Q,
    \end{split}
\end{equation*}
and
\begin{equation*}
    \begin{split}
        \fout \colon D(\ell)&\to \event{\rr}\\
        d&\mapsto \fout(d)\coloneqq \{t_1,\ldots,t_n,f\rdt(x_1),\ldots,f\rdt(x_n)\}, \quad x_i\in X.
    \end{split}
\end{equation*}
\end{definition}

\subsection{Body Dynamics}
For small variations, the body orientation $\theta_t\in\rr$ and its dynamics are expressed through the law
\begin{equation*}
    \text{d}\theta_t=\mathsf{sat}_b(u)\text{d}t,
\end{equation*}
where $u\in \rr$ represents the input received from the controller and
\begin{equation*}
\begin{split}
    \mathsf{sat}_b\colon \rr&\to \rr_{[-b,b]}\\
    r&\mapsto \mathsf{sat}_b(r)\coloneqq \begin{cases}
    r,&\text{if }r\in [-b,b],\\
    -b,&\text{if }r<-b,\\
    b,&\text{if }r>b
    \end{cases}
\end{split}
\end{equation*}
represents the saturation of the actuators, which receiving a control $r\in \rr$, are only able to commute it to an actuation $\mathsf{sat}_b(r)\in [-b,b]$. The body dynamics can be written as a machine.
\begin{definition}[Body dynamics]
\label{def:body}
Consider a reconstructor (\cref{def:reconstructor}) $P_1$ of type $\rr$ with input $\event{\rr}$ and output $\cont{\rr}$. Furthermore, consider a continuous dynamical system $P_2$ (\cref{ex:cds}) of input-output type $(C^1(\rr),C^1(\rr))$, with state space $X=\rr$ representing the pose of the robot. Then, define the dynamics as $\dot{s}=\mathsf{sat}_{b}(u)$ and the readout as the identity, i.e. $f\rdt(s)=s$. The \emph{body dynamics} $P$ are given by the composition (\cref{def:composition}) of machines $P_1\then P_2$. It has input stream of type $\event{\rr}$ and output stream of type $\lcont{\rr}$.
\end{definition}
\begin{remark}[Interpretation]
The heading regulator produces an event stream of type $\event{\rr}$. However, the body dynamics are expressed in continuous time and therefore one needs a reconstructor $P_1$. Then, $P_2$ just represents the continuous dynamics.
\end{remark}

\subsection{Observed Scene}
Considering the variations in the dynamics, resulting in the current pose $\theta_t$, one can write the variations of the light intensities for each pixel $s\in \mathcal{S}$ with direction $\dir(s)\in\mathbb{S}^1$ as $I_{t}^{s}=m(\theta_t+\dir(s))$, where $m$ represents the environment reflectance introduced in Section~\ref{sec:backgroundevent}.

\begin{definition}[Scene observed by an event-camera pixel]
\label{def:scenepixel}
Given a pixel $s\in \mathcal{S}$ of an event-camera, let $\dir(s)$ be its fixed direction. The \emph{scene observed by an event-camera pixel} is an $(\lcont{\rr},\lcont{\Rgeq})$ machine
\begin{center}
    \begin{tikzcd}[row sep=tiny]
        &\lcont{\rr}\ar[dl, equal]
        \ar[dr, "\fout"]&\\
        \lcont{\rr}&&\lcont{\Rgeq}
    \end{tikzcd}
\end{center}
For $\theta\in \lcont{\rr}(\ell)$, define
\begin{equation*}
    \begin{split}
        \fout\colon \lcont{\rr}(\ell)&\to \lcont{\Rgeq}\\
        \theta&\mapsto (\theta+\dir(s))\then m.
    \end{split}
\end{equation*}
\end{definition}

\begin{definition}[Scene observed by an event-camera]
\label{def:scenecamera}
Consider an event camera as in \cref{def:eventcamera}. The scene observed by each pixel of the camera is a machine $P$ of type $(\lcont{\rr},\lcont{\Rgeq})$ (Definition~\ref{def:scenepixel}).  The \emph{scene observed by an event camera} is the machine given by the product (Definition~\ref{def:product}) of machines $\prod_{s\in\cat{S}}P=P^\mathcal{S}$.
\end{definition}
\begin{remark}
Note that the output of the machine presented in Definition~\ref{def:scenecamera} is of the same type of the input of the machine presented in Definition~\ref{def:eventcamera}. This allows us to close the loop reported in Figure~\ref{fig:robot} using the trace (Definition~\ref{def:trace}). 

Consider an event camera $C$ (Definition~\ref{def:eventcamera}), a heading regulator $H$ (Definition~\ref{def:headingregulatorEB}), the body dynamics $B$ (Definition~\ref{def:body}), and the scene observed by the event camera $O$ (Definition~\ref{def:scenecamera}). In order to take the trace (and have the result be deterministic and total) we compose with a $\rr$-type $\varepsilon$-delay machine (\cref{ex:delay}) $\mathrm{Del}^\rr_\varepsilon$.  Then the trace
\begin{equation*}
    \mathrm{Tr}^{\rr}(B\then O\then C\then H \then \mathrm{Del}^{\rr}_\varepsilon),
\end{equation*}
 of the composite machine (Definition~\ref{def:composition}), is the desired closed-loop behavior of the robotic system.
\end{remark}

\chapter{Future Work}
\label{sec:conclusion}
While in this paper we present a simple framework, capable of describing complex event-based systems, it does not subsume the literature presented in \cref{sec:intro} yet. In particular, we look forward to exploring three extensions. First, we would like to explicitly introduce a notion of uncertainty, which would allow for a more accurate description of particular systems. Second, we would like to explore the implications of super-dense time, mentioned in~\cite{Lee2010}. Third, we aim at developing tools for the synthesis of event-based systems.

\newpage
\printbibliography
\newpage 
\appendix
\clearpage
\chapter{Appendix}
\label{sec:appendix}
\section{Background}
\label{sec:appendixone}
In this section we give some background definitions and examples, in the order they appear in the main paper.

\begin{definition}[Category]
\label{def:category}
A \emph{category} $\Cat{C}$ is defined by describing four components: Its \emph{objects}, its \emph{morphisms},
its \emph{identities}, and its \emph{composition operations}. A category consists of:
\begin{enumerate}
\item \emph{Objects}: A collection $\ob_C$, whose elements are called \emph{objects}.
\item \emph{Morphisms}: For every pair of objects $X, Y\in\ob_C$, there is
a way to define the set $\Hom_C(X, Y)$, which is called the ``hom-set from
$X$ to $Y$''. The elements of this set are called \emph{morphisms}.
\item \emph{Identity morphism}:  For each object $X$, there is
an element $\identity{X} \in \Hom_C(X,X) $ which is called \emph{the identity morphism in $X$}.
\item \emph{Composition operation}: Given a morphism $f \in  \Hom(X,Y) $ and a morphism $g \in \Hom(Y, Z)$, then
there exists a morphism $f \then g \in \Hom(X, Z)$.

Additionally, there are two properties that must hold:

\begin{enumerate}
    \item Unitality: For a morphism $f\in\Hom(x,y)$: \[\identity{x} \then f=f \then \identity{y}.\]
    \item Associativity: Given a morphism $h\in \Hom_\Cat{C}(Z,W)$, the composition is associative, i.e. \[(f\then g)\then h= f \then (g \then h).\]
\end{enumerate}

\end{enumerate}
\end{definition}

\begin{remark}[Opposite category]
Given a category $\Cat{C}$, the opposite category $\Cat{C}^\mathrm{op}$ has the same objects as $\Cat{C}$, but a morphism $f:x\to y$ in $\Cat{C}^\mathrm{op}$ is the same as a morphism $f:y\to x$ in $\Cat{C}$, and a composition $g\then f$ in $\Cat{C}^\mathrm{op}$ is defined to be $f\then g$ in $\Cat{C}$.
\end{remark}

\begin{definition}[Category of sets]
    The category $\set$ is the category of sets and is defined by:
    \begin{enumerate}
    \item \emph{Objects}: The objects of this category are all sets.
    \item \emph{Morphisms}: The morphisms between any pair of sets $X, Y$
    are functions from $X$ to $Y$.
    \item \emph{Identity morphism}: The identity morphism for the set $X$
    is the identity function $\text{Id}_X$.
    \item \emph{Composition operation}: The composition operation is given by function
    composition.
    \end{enumerate}
\end{definition}

\begin{definition}[Category of continuous intervals $\Int$ (\cref{def:categoryint})]
\label{def:categoryint}
The \emph{category of continuous intervals $\Int$} is composed of:
\begin{itemize}
    \item Objects: $\ob(\Int)\coloneqq \Rgeq$. We denote such an object by $\ell$ and refer to it as a \emph{duration}.
    \item Morphisms: Given two durations $\ell$ and $\ell'$, the set $\Int(\ell,\ell')$ of morphisms between them is 
    \begin{equation*}
        \Int(\ell,\ell')\coloneqq\{ \phase{a} \mid a \in \Rgeq \text{ and }a+\ell\leq \ell'\}.
    \end{equation*}
    \item The identity morphism on $\ell$ is the unique element  $\identity{\ell}\coloneqq\phase{0}\in\Int(\ell,\ell).$
    \item Given two morphisms $\phase{a}\colon \ell\to \ell'$ and $\phase{b}\colon\ell'\to \ell''$, we define their composition as $\phase{a}\then \phase{b}= \phase{a+b}\in\Int(\ell,\ell'')$.
\end{itemize}
\end{definition}

\begin{remark}[Interpretation of $\Int$]
In the following, we report two sketches which give an interpretation of $\Int$, showing how $\phase{a}$ acts on its objects. Particularly, on the left one can see intervals between two numbers, i.e. exemplary morphisms between two objects. On the right, the same concept is expressed in category theory terms. 
\begin{center}
\begin{tikzpicture}
\node at (5,-1) {\begin{tikzcd}[row sep = large]
2 \arrow[loop above,"\phase{0}"]
\arrow[bend right = 60, "\phase{2}", swap]{d}
\arrow[bend left = 60, "\phase{0}"]{d}
\arrow[bend left = 30, "", swap,dashed]{d}
\arrow[bend right = 30, "", swap,dashed]{d}
\arrow[d,"",dashed]\\
4\arrow[loop below,"\phase{0}"]
\end{tikzcd}};
\draw [|-|] node[pos=0,left] {0} (0,0) -- node[pos=1,right] {4}(2,0);
\draw [|-|]  (0,-0.75) -- node[pos=0,left] {0} node[pos=1,right] {2}node[pos=0.5,above] {$\phase{0}$}(1,-0.75);
\draw [|-|] (0.5,-1.5) -- node[pos=0,left] {1} node[pos=1,right] {3}node[pos=0.5,above] {$\phase{2}$}(1.5,-1.5);
\draw [|-|] (1,-2.25) -- node[pos=0,left] {2} node[pos=1,right] {4}node[pos=0.5,above] {$\phase{2}$}(2,-2.25);
\end{tikzpicture}
\end{center}
\end{remark}

\begin{remark}[Interpretation for sections]
In the following, we use Figure~\ref{fig:sectionexplanation} to interpret the sections of a presheaf, defined in~\cref{def:intsheaf}. Consider an $\Int$-presheaf $A$. Given any duration $\ell$, we refer to elements $x\in A(\ell)$ as \emph{length-$\ell$ sections (behaviors)} of $A$. In Figure~\ref{fig:sectionexplanation}, this corresponds to any curve (take e.g. the \textcolor{red}{red} one) for the whole interval $[0,\ell]$. For any section $x\in A(\ell)$ and any map $\phase{a}\colon \ell'\to \ell$ we write $\rest{x}{a}{a+\ell'}$ to denote its \emph{restriction} $A(\phase{a})(x)\in A(\ell')$, which in Figure~\ref{fig:sectionexplanation} corresponds to all curves in the interval $[a,a+\ell']$ (i.e., between dashed lines). Particularly, $x\in A(\ell')$ corresponds to one specific curve in this restricted interval.
        \begin{figure}[h!]
            \centering
            \begin{tikzpicture}
            \begin{axis}[
             xtick={0,2.3562,4.7123889,6.28318},
             xticklabels={$0$,$a$,$a+\ell'$,$\ell$},
             domain=0:2*pi,
             samples=300,
             axis x line=middle,
             axis y line=none,
             xlabel = {$p$}]
             \addplot[ultra thick, color=blue,domain=0.1:2*pi] {0.002*-sin(3*deg(x))} node[pos=0.85,left] {};
             \addplot[ultra thick, color=red] {0.001*cos(2.5*deg(x))} node[pos=0.85,left] {};
             \addplot[ultra thick, color=green] {0.0005*x} node[pos=0.85,right] {};
             \addplot +[mark=none,dashed, color=black] coordinates {(2.3562,
             -0.004) (2.3562, 0.004)};
             \addplot +[mark=none,dashed, color=black] coordinates {(4.7123889, -0.004) (4.7123889, 0.004)};
            \end{axis}
            \end{tikzpicture}
            \caption{Interpretation of sections.}
            \label{fig:sectionexplanation}
        \end{figure}
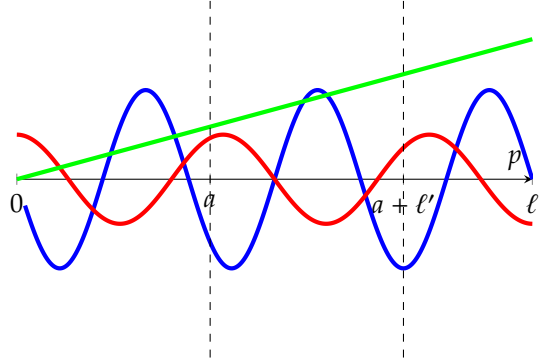
\end{remark}

\begin{definition}[Subcategory]
\label{def:subcategory}
Given a category $\Cat{C}$, a \emph{subcategory} $\Cat{D}$ consists of a subcollection of the collection of objects of $\Cat{C}$ and a subcollection of the collection of morphisms of $\Cat{C}$, such that:
\begin{itemize}
    \item If the morphism $f\colon x\to y$ is in $\Cat{D}$, then so are $x$ and $y$.
    \item If $f\colon x\to y$ and $g\colon y\to z$ are in $\Cat{D}$, then so is their composite $f\then g\colon x\to z$.
    \item If $x$ is in $\Cat{D}$, then so is the identity morphism $\identity{x}$.
\end{itemize}
\end{definition}

\begin{definition}[Diagram]
In a category $\Cat{C}$, a \emph{diagram} of shape $J$ (where $J$ is a category) is a functor $J\to C$.
\end{definition}

\begin{definition}[Span]
Given a category $\Cat{C}$, a \emph{span} from an object $x$ to an object $y$ is a diagram of the form
\begin{center}
    \begin{tikzcd}
      &z\arrow[ld,swap,"f"]\arrow[rd,"g"]&\\
      x&&y
    \end{tikzcd}
\end{center}
where $z$ is some other object of $\Cat{C}$. 
\end{definition}

\begin{definition}[Initial object]
Given a category $\Cat{C}$, an \emph{initial object} in $\Cat{C}$ is an object $\varnothing \in \Cat{C}$ such that for each object $c\in \Cat{C}$, there exists a unique morphism $!_t\colon \varnothing \to c$.
\end{definition}

\begin{definition}[Terminal object]
Given a category $\Cat{C}$, a \emph{terminal object} in $\Cat{C}$ is an object $z\in \Cat{C}$ such that for each object $c\in \Cat{C}$, there exists a unique morphism $!^c\colon c\to z$.
\end{definition}

\begin{definition}[Cone]
Consider a diagram $D$ in a category $\Cat{C}$, i.e., a collection of $\Cat{C}$-objects $d_i,d_j,\hdots$, together with $\Cat{C}$-morphisms $g\colon d_i\to d_j$ between certain objects of $\Cat{C}$. A \emph{cone} for diagram $D$ consists of a $\Cat{C}$-object $c$ together with a $\Cat{C}$-morphism $f_i\colon c\to d_i$ for each object $d_i\in D$, such that
\begin{center}
\begin{tikzcd}
  d_i \arrow[rr,"g"]& &d_j\\
&c\arrow[lu,"f_i"] \arrow[ru, swap,"f_j"]&
\end{tikzcd}
\end{center}
commutes, for each morphism $g$ in diagram $D$. We denote such a cone with $\{f_i\colon c\to d_i\}$.
\end{definition}

\begin{definition}[Limit]
A limit for a diagram $D$ is a $D$-cone $\{f_i\colon c\to d_i\}$ such that for any other $D$-cone $\{f_i'\colon c'\to d_i\}$, there is exactly one morphism $f\colon c'\to c$ such that
\begin{center}
    \begin{tikzcd}
    &d_i&\\
    c'\arrow[rr,"f"'] \arrow[ru,"f_i'"]& &c\arrow[lu,"f_i"']
    \end{tikzcd}
\end{center}
commutes for every object $d_i\in D$.
\end{definition}

\begin{definition}[Pullback]
Given a category $\Cat{C}$ and a pair of morphisms $f\colon a\to c$, $g\colon b\to c$ having a common codomain, if the limit exists for the diagram
\begin{center}
    \begin{tikzcd}
    a\arrow[rd,swap,"f"]&&b\arrow[ld,"g"]\\
    &c&
    \end{tikzcd}
\end{center}
giving a commuting diagram of the following form
\begin{center}
    \begin{tikzcd}
    &x\arrow[rd,"p_b"]\arrow[ld,swap,"p_a"]&\\
    a\arrow[rd,swap,"f"]&&b\arrow[ld,"g"]\\
    &c&
    \end{tikzcd}
\end{center}
we refer to it as a \emph{pullback}.
\end{definition}

\begin{example}[Pullback in $\set$]
In the category $\set$, a \emph{pullback} is a subset of the cartesian product of two sets. Given a diagram of the form
\begin{center}
\begin{tikzcd}
A \arrow[dr,"f"'] & & B\arrow[ld,"g"]\\
&C
\end{tikzcd}
\end{center}
its pullback is the subset $X\subseteq A\times B$ consisting of all the pairs $(a,b)$, $a\in A,b\in B$, such that $f(a)=g(b)$.
\end{example}

\begin{definition}[Functor]
Given the categories $\Cat{C}$ and $\Cat{D}$, to specify a \emph{functor} $F\colon \Cat{C}\to \Cat{D}$ from $\Cat{C}$ to $\Cat{D}$,
\begin{enumerate}
    \item For every object $c\in \ob(\Cat{C}),$ one specifies an object $F(c)\in \ob(\Cat{D})$.
    \item For every morphism $f\colon c_1\to c_2$ in $\Cat{C}$, one specifies a morphism $F(f)\colon F(c_1)\to F(c_2)$ in $\Cat{D}$.
\end{enumerate}
The above constituents must satisfy the following two properties:
\begin{enumerate}
    \item[a)] For every object $c\in \ob(\Cat{C})$, one has $F(\identity{c})=\identity{F(c)}$.
    \item[b)] For every three objects $c_1,c_2,c_3 \in \ob(\Cat{C})$ and two morphisms $f\in \Cat{C}(c_1,c_2)$, $g\in \Cat{C}(c_2,c_3)$, the equation $F(f\then g)=F(f)\then F(g)$ holds in $\Cat{D}$.
\end{enumerate}
\end{definition}
\begin{remark}[Endofunctor]
A functor from a category to itself is called \emph{endofunctor}.
\end{remark}
\begin{remark}[Full and faithful functor]
A functor $F\colon \Cat{C} \to \Cat{D}$ is \emph{full} (resp.\ \emph{faithful} if for each pair of objects $x,y\in \Cat{C}$, the function
\begin{equation*}
    F\colon \Cat{C}(x,y)\to \Cat{D}(F(x),F(y))
\end{equation*}
is surjective (resp.\ injective).
\end{remark}

\begin{comment}
\begin{definition}[Full subcategory]
\label{def:fullsubcat}
A subcategory $\Cat{D}$ of a category $\Cat{C}$ is called a \emph{full subcategory of $\Cat{C}$} if for $x,y \in \Cat{D}$, every morphism $f\colon x\to y$ in $\Cat{C}$ is also in $\Cat{D}$, i.e. the functor $\Cat{D}\to \Cat{C}$ is full and faithful.
\end{definition}
\end{comment}

\begin{definition}[Presheaf]
\label{def.int_presheaf}
A \emph{presheaf} on a small category $\Cat{C}$ is a functor
\begin{equation*}
    F\colon \op{\Cat{C}}\to \set 
\end{equation*}
from the opposite category $\op{\Cat{C}}$ of $\Cat{C}$ to the category of sets and functions.
\end{definition}

\begin{definition}[Natural transformation]
Let $\Cat{C}$ and $\Cat{D}$ be categories, and let $F,G\colon \Cat{C}\to \Cat{D}$ be functors. To specify a \emph{natural transformation} $\alpha\colon F\to G$
\[
\begin{tikzcd}
\Cat{C} \ar[r, bend left, "F"]\ar[r, bend right, "G"']&
\Cat{D}\ar[l, phantom, "\tiny{\Downarrow \alpha}"]
\end{tikzcd}
\]
one specifies for each obect $c\in \Cat{C}$ a morphism $\alpha_c\colon F(c)\to G(c)$ in $\Cat{D}$, called the $c$\emph{-component} of $\alpha$. For every morphism $f\colon c\to d$ in $\Cat{C}$, these components must satisfy the \emph{naturality condition}:
\begin{equation*}
    F(f)\then \alpha_d = \alpha_c\then G(f),
\end{equation*}
i.e. the following diagram must commute:
\begin{center}
    \begin{tikzcd}
    F(c)\arrow{r}{F(f)}\arrow[swap]{d}{\alpha_c}&F(d)\arrow{d}{\alpha_d}\\
    G(c)\ar[r, "G(f)"']&G(d)
    \end{tikzcd}
\end{center}
\end{definition}

\begin{remark}[Natural isomorphism]
A natural transformation $\alpha\colon F\to G$ is called a \emph{natural isomorphism} if each component $\alpha_c$ is an isomorphism in $\Cat{D}$.
\end{remark}

\begin{definition}[Adjunction]
Let $\Cat{A}$ and $\Cat{B}$ be categories and consider functors $F\colon \Cat{A}\to \Cat{B}$ and $G\colon \Cat{B}\to \Cat{A}$. An \emph{adjunction} between $F$ and $G$ is a bijection
\begin{equation*}
    \Hom_\Cat{D}(F\Cat{C},\Cat{D})\overset{\cong}{\to} \Hom_\Cat{C}(\Cat{C},G\Cat{C}),
\end{equation*}
for each $C\in \Cat{C}$, $D\in \Cat{D}$. We call $F$ the \emph{left-adjoint} and $G$ the \emph{right-adjoint}, and denote the adjunction by $F\dashv G$.
\end{definition}

\begin{definition}[Symmetric monoidal category]\label{def.smc}
Given a category $\Cat{C}$, a \emph{symmetric monoidal structure} on $\Cat{C}$ consists of:
\begin{enumerate}
    \item An object $I\in \ob(\Cat{C})$ called the \emph{monoidal unit}.
    \item A functor $\otimes \colon \Cat{C}\times \Cat{C}\to \Cat{C}$, called the \emph{monoidal product}.
\end{enumerate}
The two constituents are subject to the natural isomorphisms:
\begin{enumerate}
    \item[a)] $\lambda_c\colon I\otimes c \cong c$ for every $c\in \ob(\Cat{C})$,
    \item[b)] $\rho_c\colon c\otimes I \cong c$ for every $c\in \ob(\Cat{C})$,
    \item[c)] $\alpha_{c,d,e}\colon (c\otimes d)\otimes e \cong c\otimes (d\otimes e)$ for every for every $c,d,e\in \ob(\Cat{C})$, and
    \item[d)] $\sigma_{c,d}\colon c\otimes d \cong d\otimes c$, for every $c,d\in \ob(\Cat{C})$, such that $\sigma \then \sigma=\identity{}$.
\end{enumerate}
These isomorphisms are themselves required to satisfy the triangle identity
\begin{center}
\begin{tikzcd}
    (c\otimes I)\otimes d \arrow{rr}{\alpha_{c,I,d}} \arrow[swap]{dr}{\rho_c \otimes I} && c\otimes(I\otimes d)\arrow{dl}{I\otimes \lambda_d}\\
    &c\otimes d&
\end{tikzcd}
\end{center}
and the pentagon identity
\begin{center}
\begin{tikzcd}
    &(a\otimes b)\otimes (c\otimes d) \arrow{dr}&\\
    ((a\otimes b)\otimes c)\otimes d \arrow{d}{\otimes \identity{d}}\arrow{ur}&&(a\otimes(b\otimes(c\otimes d)))\\
    (a\otimes (b\otimes c))\otimes d\arrow{rr}{\alpha_{a,b\otimes c,d}} && a\otimes((b\otimes c)\otimes d)\arrow[u,""]
\end{tikzcd},
\end{center}
for $a,b,c,d\in \ob(\Cat{C})$.
\noindent A category equipped with a symmetric monoidal structure is called a \emph{symmetric monoidal category}.
\end{definition}

\begin{example}[Monoidal structures on $\set$]
As an example, there are at least three different monoidal structures on $\set$: $(+,\varnothing),(\times,1)$, and $(\odot,\varnothing)$, as presented in \cref{remark:monoidalstructure}, where $A\odot B=A+AB+B$.
\end{example}

\begin{definition}[Strong monoidal functor]
Let $(\Cat{C},\otimes_\Cat{C},I_\Cat{C})$ and $(\Cat{D},\otimes_\Cat{D},I_\Cat{D})$ be two monoidal categories. A \emph{strong monoidal functor} between $\Cat{C}$ and $\Cat{D}$ is given by:
\begin{enumerate}
    \item A functor 
    \begin{equation*}
        F\colon \Cat{C}\to \Cat{D}.
    \end{equation*}
    \item An isomorphism 
    \begin{equation*}
        \epsilon\colon I_\Cat{D}\to F(I_\Cat{C}).
    \end{equation*}
    \item A natural isomorphism
    \begin{equation*}
        \mu_{x,y}\colon F(x)\otimes_\Cat{D} F(y) \to F(x\otimes_\Cat{C} y),\quad \forall x,y\in \Cat{C},
    \end{equation*}
\end{enumerate}
satisfying the following conditions:
\begin{enumerate}
    \item[a)] \emph{Associativity}: For all objects $x,y,z\in \Cat{C}$, the following diagram commutes.
    \begin{center}
        \begin{tikzcd}[column sep=50pt]
        \left( F(x)\otimes_\Cat{D} F(y) \right) \otimes_\Cat{D} F(z) \arrow[r, "\alpha^\Cat{D}_{F(x),F(y),F(z)}"] \arrow[d]& F(x)\otimes_\Cat{D} \left(F(y)  \otimes_\Cat{D} F(z)\right)\arrow[d]\\
        F(x\otimes_\Cat{C} y) \otimes_\Cat{D} F(z)\arrow[d]&F(x) \otimes_\Cat{D} F(y\otimes_\Cat{C} z)\arrow[d]\\
        F((x\otimes_\Cat{C} y)\otimes_\Cat{C} z) \arrow[r, "F(\alpha^\Cat{C}_{x,y,z})", swap]&
        F(x\otimes_\Cat{C} (y\otimes_\Cat{C} z))
\end{tikzcd}
    \end{center}
    where $a^\Cat{C}$ and $a^\Cat{D}$ are called \emph{associators}.
    \item[b)] \emph{Unitality}: For all $x\in \Cat{C}$, the following diagrams commute:
    \begin{center}
        \begin{tikzcd}
        I_\Cat{D}\otimes_\Cat{D}F(x) \arrow[r,"\epsilon \otimes \identity{}"]\arrow[d]&F(I_\Cat{C})\otimes_\Cat{D}F(x)\arrow[d]\\
        F(x)&F(I_\Cat{C} \otimes_\Cat{C} x)\arrow[l,"F(l^\Cat{C}_x)"]
        \end{tikzcd} $\qquad$
         \begin{tikzcd}
        F(x) \otimes_\Cat{D} I_\Cat{D} \arrow[d]\arrow[r,"\identity{} \otimes \epsilon"]&F(x)\otimes_\Cat{D}F(I_\Cat{C})\arrow[d]\\
        F(x)&F(x\otimes_\Cat{C}I_\Cat{C})\arrow[l,"F(r^\Cat{C}_x)"]
\end{tikzcd}
    \end{center}
    where $l^\Cat{C}$ and $r^\Cat{C}$ represent the left and right \emph{unitors}.
\end{enumerate}
\end{definition}

\begin{definition}[$\varepsilon$-extension]
Let $A\in \sheaf{\Int}$ and $\varepsilon \geq 0$. For a section $a\in A(l)$, if $a'\in A(l+\varepsilon)$ is a section with $\rest{a'}{0}{l}=a$, we call $a'$ an $\varepsilon$-extension of $a$. We define the $\varepsilon$-extension of $A$, denoted $\extension{\varepsilon}(A)$ to be the sheaf with $\extension{\varepsilon}(A)(\ell)\coloneqq A(\ell+\epsilon)$. 

This is functorial, i.e.\ we have an endofunctor
\begin{equation}
\begin{split}
    \extension{\varepsilon}: \sheaf{\Int}&\rightarrow \sheaf{\Int}\\
   A(l)&\mapsto \extension{\varepsilon}(A)(l)\coloneqq A(l+\varepsilon).
\end{split}
\end{equation}
We will use a natural transformation $\lambda: \extension{\varepsilon} \Rightarrow \identity{\sheaf{\Int}}$, given by
\begin{equation}
        \extension{\varepsilon} (A)(l)=A(l+\varepsilon) \underset{\lambda_l}{\rightarrow} A(l)   
\end{equation}
given by left restriction, $\lambda(x)\coloneqq \lambda_l(x)=\rest{x}{0}{l}$ for any $x\in A(l+\varepsilon)$.
\end{definition}

\begin{definition}[$\varepsilon$-inertial sheaf]
A sheaf $f:C\rightarrow B$ is called $\varepsilon$-\emph{inertial} when there exists a factorization through the $\varepsilon$-extension of $B$ via a sheaf map $\bar{p}$:
\begin{center}
    \begin{tikzcd}
    & \extension{\varepsilon}(B) \arrow[d, "\lambda"]\\
    C \arrow[ru,dashed,"\bar{p}"] \arrow[r,swap,"p"] &B
    \end{tikzcd}
\end{center}
i.e., $\rest{\bar{p}_\ell(s)}{0}{\ell}=p_\ell(s)\in B(\ell)$ for any $s\in S(\ell)$.
\end{definition}
\begin{remark}[$\varepsilon$-inertial machine]
We call a machine $(\fin,\fout)\colon C\to A\times B$  $\varepsilon$-inertial if the output map $\fout$ is $\varepsilon$-inertial.
\end{remark}

\begin{definition}[Total and deterministic sheaves]
Let $p \colon S\to A$ be a continuous sheaf morphism. For any $\varepsilon\geq 0$, we consider the outer naturality square for $\lambda$ in $\sheaf{\Int}$, where $h^\varepsilon$ is the universal map to the pullback $S'$:
\begin{center}
\begin{tikzcd}
\extension{\varepsilon}(S)
\arrow[drr, bend left, "\lambda_S"]
\arrow[ddr, bend right, "\extension{\varepsilon}(p)",swap]
\arrow[dr, dotted, "h^\varepsilon"] & & \\
& S' \arrow[r] \arrow[d, "p'",swap] \pullback & S \arrow[d, "p"] \\
& \extension{\varepsilon}(A) \arrow[r, "\lambda_A",swap] &A
\end{tikzcd}
\end{center}
We say that the sheaf is \emph{total} if, for all $\varepsilon>0$, the function $h_0^\varepsilon \colon S(\varepsilon)\to S'(0)$ is surjective. We say that the sheaf is \emph{deterministic} if, for all $\varepsilon>0$, the function $h_0^\varepsilon \colon S(\varepsilon)\to S'(0)$ is injective.
\end{definition}

\begin{definition}[Total and deterministic machines]
\label{def:totaldet}
A $(A,B)$ machine with $\fin$ and $\fout$ is \emph{total} (respectively \emph{deterministic}) if 
\begin{itemize}
    \item the input map $\fin$ is total (respectively deterministic), and
    \item the output map $\fout$ is inertial.
\end{itemize}
\end{definition}

\begin{discussion}
\label{app:discat}
There exists a category of machines $\mach$, where the objects are $\Int$-sheaves, the morphisms are machines (\cref{def.mach}), the identity machine of type $A\in \sheaf{\Int}$ is given by $(A,\identity{},\identity{})$, and the composition is given by machines composition (\cref{def:composition}). If one considers only total and deterministic machines, they form a subcategory of $\mach$. This category is not traced. If instead one looks at inertial machines, they do not form a category because there is no identity machine. However they do form an algebra on an operad of wiring diagrams and the trace (called simply morphism) exists; see \cite{Schultz2019} for details.
\end{discussion}

\begin{definition}[Topological space]
A \emph{topological space} is a set $S$ equipped with a set of subsets $U\subset S$ called the open sets, which are closed under:
\begin{enumerate}
    \item Finite intersections, and
    \item arbitrary unions.
\end{enumerate}
\end{definition}

\begin{definition}[Category $\Cat{Top}$]
Let $\Cat{Top}$ denote the category whose objects are topological spaces and whose morphisms are continuous functions between them.
\end{definition}

\begin{definition}[Codiscrete topology]
We denote by $\hat{S}\in \Cat{Top}$ the topological space on set $S\in \set$ whose only open sets are the empty set and $S$ itself. This is called the \emph{codiscrete topology} on $S$.
\end{definition}

\begin{example}
Consider a functor $U\colon \Cat{Top} \to \set$, sending any topological space to the underlying set. This has a right adjoint $\mathrm{Codisc}\colon \set \to \Cat{Top}$, called the codiscrete topology.
\end{example}

\begin{definition}[Metric space]
A \emph{metric} space is an ordered pair $(M,\mathrm{dist})$ where $M$ is a set and $\mathrm{dist}$ is a metric on $M$, i.e. a function $\mathrm{dist}\colon M\times M\to \rr$ such that for any $x,y,z\in M$, the following holds:
\begin{enumerate}
    \item $\mathrm{dist}(x,y)=0 \Leftrightarrow x=y$.
    \item $\mathrm{dist}(x,y)=(y,x)$.
    \item $\mathrm{dist}(x,z)\leq \mathrm{dist}(x,y)+\mathrm{dist}(y,z)$.
\end{enumerate}
The function $\mathrm{dist}$ is also called \emph{distance function}.
\end{definition}
\newpage
\section{Proofs}
\begin{proposition}[\cref{prop:intcat}]
\label{app:intcat}
$\Int$ is indeed a category: It satisfies associativity and unitality.
\end{proposition}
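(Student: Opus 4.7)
The plan is to verify the three things a category must satisfy—well-definedness of composition, unitality, and associativity—all of which reduce to elementary facts about addition of non-negative reals. Since each morphism $\phase{a}$ is determined by a single non-negative real $a$, and composition is $\phase{a}\then\phase{b}=\phase{a+b}$, the proof is essentially a direct computation with the monoid $(\Rgeq,+,0)$.

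First I would check that composition is well-defined: if $\phase{a}\in\Int(\ell,\ell')$ and $\phase{b}\in\Int(\ell',\ell'')$, then by definition $a+\ell\leq\ell'$ and $b+\ell'\leq\ell''$. Adding gives $(a+b)+\ell\leq b+\ell'\leq\ell''$, so $\phase{a+b}$ genuinely lies in $\Int(\ell,\ell'')$. This is the only step where the constraint $a+\ell\leq\ell'$ plays a role, and it is crucial that this constraint is preserved under composition.

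Next I would verify unitality: for any $\phase{a}\colon\ell\to\ell'$, we have $\identity{\ell}\then\phase{a}=\phase{0}\then\phase{a}=\phase{0+a}=\phase{a}$ and similarly $\phase{a}\then\identity{\ell'}=\phase{a+0}=\phase{a}$, which follows immediately from $0$ being the additive identity in $\Rgeq$. For associativity, given composable morphisms $\phase{a},\phase{b},\phase{c}$, we compute
\begin{equation*}
(\phase{a}\then\phase{b})\then\phase{c}=\phase{(a+b)+c}=\phase{a+(b+c)}=\phase{a}\then(\phase{b}\then\phase{c}),
\end{equation*}
using associativity of addition in $\Rgeq$.

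There is no real obstacle here; the only subtlety is remembering that a category requires composition to be well-defined with respect to its source/target constraints, which in this case means checking that the inequality $a+\ell\leq\ell'$ is preserved by adding. Everything else transfers directly from the monoid structure of $(\Rgeq,+,0)$.
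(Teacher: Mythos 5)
Your proof is correct and follows essentially the same route as the paper's: direct computation of unitality and associativity from the monoid structure of $(\Rgeq,+,0)$. Your additional check that composition is well-defined (that $a+\ell\leq\ell'$ and $b+\ell'\leq\ell''$ imply $(a+b)+\ell\leq\ell''$) is left implicit in the paper's proof and is a worthwhile inclusion.
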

\begin{proof}
In the following, we check unitality and associativity. Given a morphism $\phase{a}:\ell\rightarrow \ell'$, $\identity{\ell}:\ell\rightarrow \ell$, and $\identity{\ell'}:\ell'\rightarrow \ell'$, with $\identity{\ell}=\phase{0}\in \Int(\ell,\ell)$ and $\identity{\ell'}=\phase{0}'\in \Int(\ell',\ell')$, we know that 
\begin{equation*}
    \begin{split}
        \identity{\ell} \then \phase{a}&=\phase{0}\then \phase{a}\\
        &=\phase{a},
    \end{split}
\end{equation*}
and
\begin{equation*}
    \begin{split}
        \phase{a}\then \identity{\ell'}&=\phase{a}\then \phase{0}'\\
        &=\phase{a}.
    \end{split}
\end{equation*}
This is unitality. Furthermore, given $\phase{a}:\ell\rightarrow \ell'$, $\phase{b}:\ell'\rightarrow \ell''$, and $\phase{c}:\ell''\rightarrow \ell'''$, we know that
        \begin{equation*}
        \begin{split}
            \left(\phase{a} \then \phase{b} \right)\then \phase{c}&= \phase{a+b}\then \phase{c}\\
            &=\phase{a+b+c},
        \end{split}
        \end{equation*}
    and
    \begin{equation*}
        \begin{split}
            \phase{a} \then \left(\phase{b} \then \phase{c}\right) &= \phase{a}\then \phase{b+c}\\
            &=\phase{a+b+c}.
        \end{split}
    \end{equation*}
This is associativity.
\end{proof}

\begin{proposition}[\cref{prop:evfunctorial}]
\label{app:evfunctorial}
$\event{}$ is functorial: Given a function $f\colon A\to B$ there is an induced morphism $\event{f}:\event{A}\to\event{B}$ in $\sheaf{\Int}$, and this assignment preserves identities and composition.
\end{proposition}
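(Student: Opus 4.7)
The plan is to define $\event{f}$ componentwise by post-composing event maps, then check the three required properties (naturality, preservation of identities, preservation of composition) as straightforward consequences of associativity and unitality of function composition in $\set$.

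Concretely, for each duration $\ell$ I would set
\begin{equation*}
\event{f}_\ell\colon \event{A}(\ell)\to\event{B}(\ell),\qquad (S,a)\mapsto (S,\, a\then f).
\end{equation*}
The time-stamp set $S\ss\tilde\ell$ is unchanged and only the event map is altered, so the output is manifestly a length-$\ell$ event stream of type $B$. To verify that the family $\{\event{f}_\ell\}_\ell$ is a morphism in $\sheaf{\Int}$, I would check naturality against an arbitrary $\phase{t}\colon\ell'\to\ell$: using \cref{def.rest_event_stream}, both routes around the square send $(S,a)$ to $(S_{t,t+\ell'},\, a_{t,t+\ell'}\then f)$, since restriction only cuts down the domain of the event map while post-composition with $f$ acts on the codomain, and these two operations manifestly commute.

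For functoriality, I would observe $\event{\identity{A}}_\ell(S,a)=(S,\, a\then\identity{A})=(S,a)$, giving $\event{\identity{A}}=\identity{\event{A}}$; and for $f\colon A\to B$, $g\colon B\to C$, associativity of composition in $\set$ yields
\begin{equation*}
\event{f\then g}_\ell(S,a)=(S,\, a\then(f\then g))=(S,\,(a\then f)\then g)=\event{g}_\ell\big(\event{f}_\ell(S,a)\big),
\end{equation*}
so $\event{f\then g}=\event{f}\then\event{g}$.

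Honestly, there is no serious obstacle here: every clause reduces to the fact that post-composition with $f$ commutes with both the restriction of the domain and with further post-composition. The only bookkeeping worth being careful about is making sure that the composite $a\then f$ is still a function from a finite subset of $\tilde\ell$ to $B$, which it is by construction, and that naturality is checked against the exact restriction formula $a_{t,t'}\colon S_{t,t'}\hookrightarrow S\to A$ rather than an informal version, so that the two codomain changes (restriction to $S_{t,t+\ell'}$ and post-composition by $f$) are seen to commute on the nose.
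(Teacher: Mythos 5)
Your proposal is correct and follows essentially the same route as the paper's proof: define $\event{f}$ levelwise by $(S,a)\mapsto(S,a\then f)$ and reduce identity- and composition-preservation to unitality and associativity of composition in $\set$. If anything, you are slightly more thorough than the paper, which omits the explicit naturality check against the restriction maps of \cref{def.rest_event_stream} that you carry out.
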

\begin{proof}
We define a functor $\event{}:\set \rightarrow  \sheaf{\Int}$ sending each object $A\in \set$ to an object $\event{A}\in \sheaf{\Int}$. Given a morphism $f\colon A\to B$, we define $\event{f}\colon\event{A}\to\event{B}$ on $\ell$ to be \[
\event{f}(\ell)(S,a)\coloneqq(S,a\then f).
\]
We need to check that composition and identities are preserved. We first show that composition is preserved. Given $f\colon A\to B$ and $g\colon B\to C$, one knows
\begin{equation*}
    \begin{split}
        \event{f\then g}(\ell)(S,a)&= (S,a\then (f\then g))\\
        &=(S,(a\then f)\then g)\\
        &=(\event{f}\then\event{g})(S,a).\
    \end{split}
\end{equation*}
We now show that identities are preserved:
\begin{equation*}
\begin{split}
    \event{\identity{A}}(\ell)(S,a)&=(S,a\then \identity{A})\\
    &=(S,a)\\
    &=\identity{\event{A}(\ell)}(S,a).
\end{split}
\end{equation*}
\end{proof}

\begin{proposition}[\cref{prop:evsheaf}]
\label{app:evsheaf}
For any set $A$, the presheaf $\event{A}$ is in fact a sheaf.
\end{proposition}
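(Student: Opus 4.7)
The plan is to verify the sheaf axiom directly: given compatible sections, construct a unique gluing. I would start by unpacking what compatibility means at an endpoint for event streams. Using the restriction map from \cref{def.rest_event_stream}, the section $\rest{(S,a)}{\ell}{\ell} \in \event{A}(0)$ is either $(\varnothing,!)$ if $\ell \notin S$, or $(\{0\}, \{0\mapsto a(\ell)\})$ if $\ell \in S$; similarly $\rest{(S',a')}{0}{0}$ records whether $0 \in S'$ and, if so, the value $a'(0)$. So compatibility of $(S,a)\in \event{A}(\ell)$ and $(S',a')\in \event{A}(\ell')$ reduces to the dichotomy ``either $\ell \notin S$ and $0 \notin S'$, or $\ell \in S$, $0 \in S'$, and $a(\ell)=a'(0)$.''

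Next, I would propose the gluing $(\bar S, \bar a) \in \event{A}(\ell+\ell')$ defined by
\begin{equation*}
    \bar S \coloneqq S \cup (S' + \ell) \;\subseteq\; \widetilde{\ell+\ell'},
    \qquad
    \bar a(s) \coloneqq \begin{cases} a(s), & s \in S,\\ a'(s-\ell), & s \in S' + \ell, \end{cases}
\end{equation*}
where $S' + \ell \coloneqq \{s'+\ell \mid s' \in S'\}$. The compatibility dichotomy is exactly what makes $\bar a$ well-defined when the two cases overlap (which happens only at $s=\ell$). Finiteness of $\bar S$ and its inclusion in $\widetilde{\ell+\ell'}$ are immediate. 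I would then check the two required identities $\rest{(\bar S,\bar a)}{0}{\ell} = (S,a)$ and $\rest{(\bar S,\bar a)}{\ell}{\ell+\ell'} = (S',a')$ by unwinding \cref{def.rest_event_stream}: the former picks out those $s \in \bar S$ with $0 \le s \le \ell$, which is precisely $S$ (together with possibly $\ell$, which lies in $S$ iff compatibility forces it), and similarly for the latter, where one also has to shift indices by $-\ell$ to land back in $\widetilde{\ell'}$.

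For uniqueness, suppose $(T, b) \in \event{A}(\ell+\ell')$ is any section with the same two restrictions. Every time-stamp of $T$ lies in $[0,\ell]$ or in $[\ell,\ell+\ell']$ (or both), so $T$ is forced to equal the disjoint-or-overlapping union $(T \cap [0,\ell]) \cup (T \cap [\ell,\ell+\ell'])$. Matching restrictions pins down $T \cap [0,\ell] = S$ and $T \cap [\ell,\ell+\ell'] = S'+\ell$, hence $T = \bar S$; similarly the two restrictions of $b$ force $b = \bar a$ on each piece, with any ambiguity at the single overlap point $\ell$ again resolved by compatibility. The only step requiring real care is the bookkeeping around the shared endpoint $\ell$, where the restriction conventions and the compatibility condition must be aligned; once that case is handled, both existence and uniqueness are routine.
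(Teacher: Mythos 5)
Your proposal is correct and follows essentially the same route as the paper's proof: unpack endpoint compatibility into the same dichotomy (both restrictions empty, or both a single matching value), glue by taking the union of $S$ with the $\ell$-shifted copy of $S'$ and the induced event map, and verify the two restriction identities. You are somewhat more explicit than the paper about well-definedness at the overlap point $\ell$ and about the uniqueness argument, but the underlying construction is identical.
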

\begin{proof}
Consider $\ell,\ell'\in \Int$  and $e\in \event{A}(\ell)$, $e'\in \event{A}(\ell')$ with
\begin{equation*}
\begin{split}
    e(\ell)&\colon \{(T,s)\mid T\subseteq \tilde{\ell}  \text{ finite }, s:T\rightarrow A\},\\
    e(\ell')&\colon \{(T',s') \mid T'\subseteq \tilde{\ell'}\text{ finite }, s'\colon T'\rightarrow A\},
\end{split}
\end{equation*}
and suppose that
\begin{equation*}
    \rest{e}{\ell}{\ell}=\rest{e'}{0}{0}.
\end{equation*}
This last condition can be rewritten as
\begin{equation*}
    \begin{cases}
    (\varnothing,!) &\text{if } \ell \notin T\\
       (\{\ell \},s(\ell)) &\text{if } \ell\in T
        \end{cases}=
         \begin{cases}
        (\varnothing,!) &\text{if } 0\notin T'\\
       (\{0\},s'(0)) &\text{if } 0\in T'
    \end{cases}
\end{equation*}
or in words, if either both are \emph{nothing} or both are just $a$ for the same $a\in A$. Then, we can \emph{glue} $e$ and $e'$ together to form a \emph{unique} $\bar{e}\in \event{A}(\ell+\ell')$
\begin{equation*}
    \bar{e}\coloneqq \left( T\cup \{t'+\ell|t'\in T'\}, s''\right),
    \end{equation*}
    where $s''$ is the universal map out of the union restricting to both $s$ and $s'$. Since 
\begin{equation*}
    \rest{\bar{e}}{0}{\ell}=e \qqand \rest{\bar{e}}{\ell}{\ell+\ell'}=e',
\end{equation*}
    $\bar{e}$ is indeed the required gluing. 
\end{proof}

\begin{proposition}[\cref{prop:evstrong}]
\label{app:evstrong}
$\event{}\colon(\set,\odot,\varnothing)\to(\sheaf{\Int},\times,1)$ is a strong monoidal functor: $1\cong \event{\varnothing}$ and for any sets $A,B$, we have
\begin{equation*}
\event{A}\times\event{B}\cong\event{A\odot B}.
\end{equation*}
\end{proposition}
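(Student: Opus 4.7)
The plan is to exhibit natural isomorphisms $\epsilon\colon 1\to\event{\varnothing}$ and $\mu_{A,B}\colon\event{A}\times\event{B}\to\event{A\odot B}$ and then verify the coherence conditions. For the unit, observe that for every $\ell$, an element of $\event{\varnothing}(\ell)$ is a pair $(S,a)$ with $a\colon S\to\varnothing$; such an $a$ exists only when $S=\varnothing$, and in that case it is unique. Hence $\event{\varnothing}(\ell)$ is a singleton for every $\ell$, and the restriction maps are forced, so $\event{\varnothing}\cong 1$ via the unique sheaf map $\epsilon$.

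For the multiplication, at each $\ell$ define
\begin{equation*}
\mu_{A,B}(\ell)\colon\event{A}(\ell)\times\event{B}(\ell)\to\event{A\odot B}(\ell)
\end{equation*}
by sending $((S_A,a),(S_B,b))$ to $(S_A\cup S_B,c)$, where for $s\in S_A\cup S_B$ we set
\begin{equation*}
c(s)=\begin{cases}
(a(s),b(s))\in A\times B, & s\in S_A\cap S_B,\\
a(s)\in A, & s\in S_A\setminus S_B,\\
b(s)\in B, & s\in S_B\setminus S_A,
\end{cases}
\end{equation*}
using the three coproduct injections into $A\odot B=A+B+A\times B$. The inverse is obtained by partitioning a finite $S\subseteq\tilde\ell$ with $c\colon S\to A\odot B$ according to the summand each $c(s)$ lies in: the $A$-part and the $A\times B$-part (projected to $A$) give $(S_A,a)$, and symmetrically for $B$. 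The two constructions are visibly mutually inverse.

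Next I would check the three naturalities. Naturality of $\mu$ in $\ell$ (i.e.\ that $\mu$ is a morphism of $\Int$-sheaves) is immediate from \cref{def.rest_event_stream}: restricting to $[t,t']$ filters time-stamps by membership in the interval, and this operation commutes with the partition into $S_A\setminus S_B$, $S_B\setminus S_A$, $S_A\cap S_B$. Naturality of $\mu$ in $A,B$ is equally direct from the definition of $\event{f}$ on arrows as post-composition with the event map. The singleton property of $\event{\varnothing}$ makes $\epsilon$ trivially natural.

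Finally, I would verify the associativity, unitality, and symmetry coherence diagrams for $(\event{},\mu,\epsilon)$. Each reduces to a pointwise statement in $\set$: for associativity one chases $((S_A,a),(S_B,b),(S_C,c))$ and checks that both composites produce $(S_A\cup S_B\cup S_C,d)$ with $d(s)$ recording exactly which of the seven nonempty subsets of $\{A,B,C\}$ of indices $s$ belongs to, together with the corresponding tuple of values; the natural isomorphism $(A\odot B)\odot C\cong A\odot(B\odot C)$ implements the required identification of these two descriptions. Unitality uses that the only element of $\event{\varnothing}(\ell)$ is the empty stream, so pairing with it has no effect on time-stamps. I expect the main obstacle to be purely bookkeeping: writing the seven-case (or, in full generality, $2^n-1$-case) explicit isomorphism for the associator cleanly enough that the coherence diagrams are visibly commutative; no real mathematics beyond the case analysis is needed.
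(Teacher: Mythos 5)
Your proof is correct and follows essentially the same route as the paper's: the unit isomorphism comes from $\event{\varnothing}(\ell)$ being a singleton, and $\mu_{A,B}$ merges the two time-stamp sets and sends each stamp into the summand of $A+B+A\times B$ determined by which of $S_A$, $S_B$ it lies in. Your treatment is if anything slightly more careful than the paper's (you spell out the inverse by partitioning along the coproduct summands and flag the coherence diagrams explicitly, where the paper writes an informal chain of identifications and asserts that associativity and unitality "follow easily").
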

\begin{proof}
Consider the functor
\begin{equation*}
    \event{}: \set \rightarrow \sheaf{\Int},
\end{equation*}
with the identity isomorphism 
\begin{equation*}
    e:1\rightarrow \event{\varnothing},
\end{equation*}
serving as the unit (see \cref{ex_init_term_behavior_types}). There is a natural isomorphism
\begin{equation*}
    \mu_{A,A'}\colon \event{A}\times \event{A'}\to \event{A\odot A'} \cong \event{A+A'+A\times A'},\quad \forall A,A' \in \set. 
\end{equation*}
To define it, we first give the components and then show that they are natural. Recall
\begin{equation*}
    \event{A}(\ell)=\{S\subseteq \tilde{\ell}\text{ finite}, a\colon S\rightarrow A\}.
\end{equation*}
Then,
\begin{equation*}
\begin{split}
    \event{A}\times \event{A'}
    &=\{(S,a),(S',a'), a\colon S\to A,a'\colon S'\to A'\}\\
    &=\{s_1,\ldots,s_n,a_1,\ldots,a_n,s_1',\ldots, s_{n'}',a_1',\ldots,a_{n'}'\}\\
    &=\{S'',c\colon S''\to A+A'+A\times A'\}\\
    &=\event{A+A'+A\times A'}\\
\end{split}
\end{equation*}
where $S''=S\cup S'\ss\tilde{\ell}$ and $c$ is defined as follows:
\begin{equation*}
\begin{split}
    c\colon S\cup S'&\to A+A'+A\times A'\\
    t&\mapsto c(t)\coloneqq
    \begin{cases}
    a(t),&t\in S, t\notin S',\\
    a'(t),&t\notin S, t\in S',\\
    \Pair{a(t)}{a'(t)},&t\in S,t\in S'.
    \end{cases}
\end{split}
\end{equation*}
This is natural meaning that for any $f\colon A\to B$ and $f'\colon A'\to B'$, the square 
\begin{center}
\begin{tikzcd}
    \event{A}\times\event{A'}\ar[r]\ar[d]&
    \event{A+A'+A\times A'}\ar[d]\\
    \event{B}\times\event{B'}\ar[r]&
    \event{B+B'+B\times B'}
\end{tikzcd}
\end{center}
commutes. Associativity and unitality follow easily from the functor and monoidal structure definitions.
\end{proof}

\end{document}